\title{One Cyclic Codes over $\mathbb{F}_{p^k}+v\mathbb{F}_{p^k}+v^2\mathbb{F}_{p^k}+...+v^r\mathbb{F}_{p^k}$}
\begin{document}

\pagenumbering{arabic}
\pagestyle {plain}

\ifthenelse{\boolean{SUBMISSION}}{
	\author{}
	\institute{
	\vspace{17ex}
}
}{
	\pagestyle{headings}
	\author{
	    Ousmane Ndiaye
	}
		\institute{
Universit\'e Cheikh Anta Diop de Dakar, FST, DMI, LACGAA,\\ Senegal,\\
\email{ousmane3.ndiaye@ucad.edu.sn}
	}
}

\maketitle
\begin{abstract}
In this paper, we investigate cyclic code over the ring $\mathbb{F}_{p^k}+v\mathbb{F}_{p^k}+v^2\mathbb{F}_{p^k}+...+v^r\mathbb{F}_{p^k}$, where $v^{r+1}=v$, $p$ a prime number, $r>1$ and $\gcd(r,p)=1$, we prove as generalisation of \cite{Sol15} that these codes are principally generated, give generator polynomial and idempotent depending on idempotents over this ring as response to an open problem related in \cite{qian05}. we also give a gray map and proprieties of the related dual code.
\medskip

\noindent
\textbf{Keywords.}
cyclic codes, generating idempotents, dual codes.
\end{abstract}

\section{Introduction} \label{sec:introduction}
Linear code over finite rings receives intensive study  in the last decades of twentieth. This study  over finite rings was motivated after the establishment of gray maps between non-linear codes and linear code over $\mathbb{Z}_4$ e.g. in \cite{HCSS94} and \cite{PQ96}.
Calderbank et al. give the structure of cyclic codes over $\mathbb{Z}_{p^{m}}$ in \cite{Cal95}.\\
C. Bachoc introduced the linear codes $\mathbb{F}_p+u\mathbb{F}_p$. These codes was also studied and them self-dual over $\mathbb{F}_2+v\mathbb{F}_2$ and $\mathbb{F}_2+v\mathbb{F}_2+v^2\mathbb{F}_2$ in [\cite{Bon99},\cite{Dou97},\cite{Uda99}, \cite{Shi13}].\\
Recently some cyclic codes are studied on chain rings over $\mathbb{F}_p$ where $p$ is prime. In \cite{Yas09} Y. Yasemin establish a relation between cyclic code $\mathbb{F}_{p^k}+v\mathbb{F}_{p^k}+v^2\mathbb{F}_{p^k}$ and code over $\mathbb{F}_p^k$, and investigate the structure of cyclic codes over the
ring $\mathbb{F}_3+v\mathbb{F}_3$ in \cite{Yas10}. In \cite{Sol15} cyclic code over $\mathbb{F}_p+v\mathbb{F}_p+v^2\mathbb{F}_p$ was studies in term of there idempotents and skew codes over $\mathbb{F}_q+v\mathbb{F}_q+v^2\mathbb{F}_q$ in \cite{Ashraf15}, by given for all of them a gray map on a field.\\
\par Some generalizations have been done by on codes over $\mathbb{F}_p+v\mathbb{F}_{p}+...+v^r\mathbb{F}_p$, in \cite{qian05}, J. Qian et al. introduced cyclic codes on those rings by analogy with $\mathbb{Z}_{p^m}$ case. They also set an open problem by suggesting an investigation on idempotents generator of those rings and determining the existence of self-dual code.\\
In this paper, we investigate cyclic code over the ring $\mathbb{F}_{p^k}+v\mathbb{F}_{p^k}+v^2\mathbb{F}_{p^k}+...+v^r\mathbb{F}_{p^k}$, where $v^{r+1}=v$, $p$ a prime number, $r>1$ and $\gcd(r,p)=1$, we prove as generalisation of \cite{Sol15} that these code are principally generated, give generator polynomial and idempotent depending on idempotents over $R_r$. we also give a gray map and proprieties of the related dual code.\\

The material is organized as follows. The next section contains the basics of codes over
rings that we need for further notice. Section 3 derives the structure of cyclic codes over R,
Section 4 derives the structure of dual cyclic codes over R and we conclude in section 5.

\section{Preliminaries} \label{sec:prelim}

Let $A$ be a commutative ring, a code $C$ of length $n$ over $A$ is an $A$-submodule of $A^n$.The Hamming weight of a codeword is the number of non-zero components.
 an element $e$ of $A$ is called idempotent if $e^2=e$.
Let $x =(x_1,x_2,...,x_n)$ and $y =(y_1,y_2,...,y_n)$ be two elements of $A^n$, the
inner product of $x$ and $y \in R_{r}^n$ is defined by,
\begin{center}
$xy =\sum^{n}_{i=1}x_iy_i$
\end{center}
Therefore from a linear code $C$ of length $n$ over $A$ , the corresponding dual code is define by
$C^{\perp}=\{ x \in R^{n} | x.c=0 ~\forall c \in C\}$. $C$ is self-dual if $C = C^{\perp}$, $C$ is
self-orthogonal if $C \subseteq C^{\perp}$. Two codes are equivalent if one can be obtained from the other by permuting the
coordinates.\\

A cyclic code $C$ of length n over $A$ is a linear code with property that if $c =(c_0,...,c_{n-1}) \in C$ then $\sigma(c)=(c_{n-1},c_0,...,c_{n-2}) \in C$. We assume that p is not divisible by n, and we represent codeword by polynomials. Then cyclic codes are ideals of the ring $A[x]/(x^n-1)$.\\
On every ring with at least one idempotent, one may apply the Peirce decomposition given by the following lemma.
\begin{lemma}
If $e$ is an idempotent in a ring $A$ not necessary commutative and not necessary unitary, then one can decompose $A$ as the direct sum of four components, each of them related to $e$. Concretely,\\
 \begin{center}
 $A=eAe \oplus eA(1-e) \oplus (1-e)Ae \oplus (1-e)A(1-e)$
 \end{center}
 This decomposition is known as the Peirce decomposition of $A$ with respect to $e$.
 \emph{If $A$ have not identity $(1-e)A$ means}
\begin{center}
$\{ x-ex | x \in A\}$.
\end{center}
\end{lemma}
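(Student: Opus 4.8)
The plan is to reduce everything to a single algebraic identity that splits an arbitrary element of $A$ into four pieces, and then to exhibit four orthogonal projection operators that guarantee the sum is direct.

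First I would record, for an arbitrary $a \in A$, the decomposition
$$a = eae + ea(1-e) + (1-e)ae + (1-e)a(1-e),$$
and verify it by expanding each term, reading $(1-e)x$ as $x-ex$ and $x(1-e)$ as $x-xe$, and then collecting. Using only $e^2=e$, all the mixed terms cancel and the right-hand side collapses to $a$. This computation is purely formal in $e$ and $a$, so it makes sense verbatim in the non-unitary setting, where $(1-e)a(1-e)$ simply denotes the element $a-ea-ae+eae \in A$; in particular each of the four terms is a genuine element of $A$, and each of $eAe$, $eA(1-e)$, $(1-e)Ae$, $(1-e)A(1-e)$ is an additive subgroup of $A$. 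This step shows that $A$ is the sum of the four subgroups.

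Next I would characterize the four summands by the way $e$ acts by left and right multiplication. Using $e^2=e$ one checks that $x \in eAe$ iff $ex=x$ and $xe=x$; that $x \in eA(1-e)$ iff $ex=x$ and $xe=0$; that $x \in (1-e)Ae$ iff $ex=0$ and $xe=x$; and that $x \in (1-e)A(1-e)$ iff $ex=0$ and $xe=0$. These four pairs of eigenvalue conditions are what make the subgroups pairwise disjoint in the strong sense needed for a direct sum. To obtain directness I would then introduce the operators $p_{1}(x)=exe$, $p_{2}(x)=ex-exe$, $p_{3}(x)=xe-exe$, and $p_{4}(x)=x-ex-xe+exe$ on the abelian group $A$. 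From the characterizations above, each $p_{i}$ maps $A$ onto the $i$-th summand, acts as the identity on it, and annihilates the other three; consequently $p_{1}+p_{2}+p_{3}+p_{4}=\mathrm{id}_A$ and the $p_i$ are pairwise orthogonal idempotents.

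The argument then closes quickly: given any relation $x_1+x_2+x_3+x_4=0$ with $x_i$ in the $i$-th summand, applying $p_i$ yields $x_i=0$, so the representation of each element is unique and the sum is direct. The only point demanding care is the non-unitary case, where one must consistently interpret every occurrence of $1-e$ through the expanded expressions above and confirm that the cancellations, the eigenvalue characterizations, and the orthogonality of the $p_i$ never secretly invoke an identity element. They do not, since every manipulation relies only on $e^2=e$ together with associativity and distributivity, which completes the proof.
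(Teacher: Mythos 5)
Your proof is correct, but there is nothing in the paper to compare it against: the paper states this lemma without any proof, simply labelling it as the classical Peirce decomposition, and later uses only its commutative specialization $A = eA \oplus (1-e)A$ (via Lemma 2). Your argument is the standard textbook one, and it is carried out soundly in the non-unital setting: the splitting identity $a = eae + ea(1-e) + (1-e)ae + (1-e)a(1-e)$, verified using only $e^2=e$, associativity and distributivity, shows the four additive subgroups sum to $A$; the characterizations of the summands by the action of left and right multiplication by $e$ (each of $ex$ and $xe$ equal to $x$ or to $0$) identify them as simultaneous ``eigenspaces''; and the four projections $p_1(x)=exe$, $p_2(x)=ex-exe$, $p_3(x)=xe-exe$, $p_4(x)=x-ex-xe+exe$ give directness, since applying $p_i$ to a relation $x_1+x_2+x_3+x_4=0$ annihilates every term except $x_i$. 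Your insistence on reading every occurrence of $1-e$ through the expanded expressions (e.g.\ $(1-e)a(1-e) = a-ea-ae+eae$) is precisely what the lemma's closing remark about rings without identity requires, and no step secretly uses a unit. One small wording point: the identity $p_1+p_2+p_3+p_4=\mathrm{id}_A$ is just the splitting identity restated, not a consequence of the orthogonality of the $p_i$ as your phrasing suggests; this does not affect the validity of the argument, which in effect supplies a proof the paper omits.
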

In the commutative case we have
\begin{center}
$A=eA \oplus (1-e)A$
\end{center}
\begin{lemma}
Let  $e_1, ..., e_n$ be orthogonal nonzero idempotents of a commutative Ring $A$ with the property that $1 = \sum_{i=1}^{n} e_{i}$, then \begin{center}
$A=e_1A\oplus  ...\oplus  e_nA$
\end{center}
\end{lemma}
Let $R_{r}=\mathbb{F}_{p^k}+v\mathbb{F}_{p^k}+v^2\mathbb{F}_{p^k}+...+v^r\mathbb{F}_{p^k}$, where $v^{r+1}=v$, $p$ a prime number, $r>1$ and $\gcd(r,p)=1$.\\
This may be seen as $R_{r}=\mathbb{F}_{p^k}[v]/<v^{r+1}-v>$ all polynomial on $\mathbb{F}_{p^k}[v]$ of degree at most equal to $r$, where $\mathbb{F}_{p^k}$ is the primitive field. $R_{r}$ is a characteristic p ring of size $p^{r+1}$.
\begin{proposition}
  Let
\begin{center}
$e_{1}=\frac{1}{r}v+\frac{1}{r}v^2+...+\frac{1}{r}v^{r}$,
\end{center}

\begin{center}
 $e_{2}=-\frac{1}{r}v-\frac{1}{r}v^2-...-\frac{1}{r}v^{r-1}+\frac{r-1}{r}v^{r}$,
\end{center}
\begin{center}
$e_{3}=1-v^{r}$
\end{center}
 three elements of $\in R_r$ . Then $e_i$ are orthogonal nonzero idempotents verifying the Pierce conditions over $R_{r}$
\end{proposition}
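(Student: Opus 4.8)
The plan is to reduce everything to the single relation $v^{r+1}=v$ and to the observation that the three stated elements can be re-expressed through two basic quantities: the element $\epsilon=v^r$ and the sum $S=v+v^2+\dots+v^r$. First I would record the exponent arithmetic forced by $v^{r+1}=v$: for $m\ge 1$ the power $v^m$ equals $v^j$, where $j\in\{1,\dots,r\}$ is obtained from $m$ by reduction modulo $r$. In particular $v^r\cdot v^r=v^{2r}=v^r$, so $\epsilon=v^r$ is an idempotent, and $\epsilon\, v^i=v^{r+i}=v^i$ for every $1\le i\le r$. Since $\gcd(r,p)=1$, the integer $r$ is a unit in $\mathbb{F}_{p^k}$, so $\frac{1}{r}$ is well defined and all three expressions genuinely lie in $R_r$.

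Next I would establish the two multiplication rules that drive the whole computation. From $vS=v^2+\dots+v^{r+1}=v^2+\dots+v^r+v=S$ one gets $v^jS=S$ for all $j\ge 1$, whence $S^2=\sum_{i=1}^r v^i S=rS$ and $\epsilon S=v^rS=S$. With $e_1=S/r$ this immediately yields $e_1^2=S^2/r^2=rS/r^2=S/r=e_1$ and $e_1\epsilon=e_1$. The crucial simplifying step is to notice that $e_2$ collapses: regrouping $-\frac{1}{r}(v+\dots+v^{r-1})+\frac{r-1}{r}v^r=-\frac{1}{r}S+v^r$ exhibits $e_2=v^r-e_1=\epsilon-e_1$, while $e_3=1-\epsilon$ holds by definition.

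With the three clean forms $e_1=S/r$, $e_2=\epsilon-e_1$, $e_3=1-\epsilon$ in hand, every remaining assertion is a two-line computation using only $\epsilon^2=\epsilon$, $e_1^2=e_1$ and $e_1\epsilon=e_1$. I would verify completeness, $e_1+e_2+e_3=e_1+(\epsilon-e_1)+(1-\epsilon)=1$; idempotency, $e_2^2=(\epsilon-e_1)^2=\epsilon-2e_1+e_1=e_2$ and $e_3^2=(1-\epsilon)^2=1-\epsilon=e_3$; and pairwise orthogonality, namely $e_1e_2=e_1\epsilon-e_1^2=0$, $e_1e_3=e_1-e_1\epsilon=0$ and $e_2e_3=(\epsilon-e_1)(1-\epsilon)=\epsilon-\epsilon-e_1+e_1=0$. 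Nonzeroness is immediate, since each of the three is a nonzero polynomial of degree at most $r$, hence a nonzero representative in $R_r=\mathbb{F}_{p^k}[v]/\langle v^{r+1}-v\rangle$; thus the hypotheses of Lemma~2 are met. The only real obstacle is bookkeeping: getting the exponent reduction modulo $v^{r+1}=v$ exactly right, so that $v^jS=S$ and $\epsilon\,v^i=v^i$ hold with no off-by-one error, and confirming the identity $e_2=\epsilon-e_1$. Once those are secured, the Peirce conditions follow mechanically.
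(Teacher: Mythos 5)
Your proof is correct, and it takes a genuinely cleaner route than the paper's. The paper verifies $e_1^2=e_1$ by brute force: it expands $(v+v^2+\cdots+v^r)^2$ term by term, reduces every exponent via $v^{r+1}=v$, and counts that each of $v,\dots,v^r$ occurs exactly $r$ times; it then merely asserts that $e_2^2=e_2$ follows ``by the way'' (the analogous expansion is never written out), checks $e_3$ directly, and handles orthogonality by ad hoc regroupings --- its computation of $e_1e_2$ in fact uses your identity $e_2=v^r-\frac{1}{r}S$ implicitly, without naming it. You instead isolate two reusable facts, $\epsilon^2=\epsilon$ with $\epsilon v^i=v^i$, and $v^jS=S$ (hence $S^2=rS$), and promote the regrouping to the organizing principle $e_1=S/r$, $e_2=\epsilon-e_1$, $e_3=1-\epsilon$, after which every Peirce condition is a one-line identity in the commuting idempotents $e_1$ and $\epsilon$. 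This buys three things the paper's version lacks: the $e_2$ case is actually proved rather than asserted; nonzeroness, which is needed to invoke the paper's Lemma 2, is addressed at all (the paper never checks it); and the argument explains where the idempotents come from, namely $\epsilon=v^r$ acts as an identity on the span of $v,\dots,v^r$ and $S/r$ is an averaging idempotent beneath it. One small refinement on nonzeroness: when $r\equiv 1\pmod{p}$ the coefficient $\frac{r-1}{r}$ of $v^r$ in $e_2$ vanishes, so you should point to the coefficients $-\frac{1}{r}$ of $v,\dots,v^{r-1}$ (a nonempty list since $r>1$) rather than to any single fixed term.
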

\begin{proof}..\\
\begin{itemize}
  \item $e_1+e_2+e_3= 1?$
  \begin{eqnarray*}
    e_1+e_2+e_3 &=& \frac{1}{r}\sum_{i=1}^{r-1}v^{i}(1-1)+v^{r}(\frac{r-1}{r}+\frac{1}{r}-1)+1\\
          &=&  1
  \end{eqnarray*}
  \item $e_i^2=e_i?$
  \begin{eqnarray*}
    e_1^2 &=& \frac{1}{r^2}(v+v^2+...+v^{r})(v+v^2+...+v^{r})\\
          &=& \frac{1}{r^2}(v^2+v^3+v^4+...+v^{r+1}+\\
          & & ~~~~~~~~~~~v^3+v^4+...+v^{r+1}+v^{r+2}+\\
          & & ~~~~~~~~~~~~~~~+v^4+...+v^{r+1}+v^{r+2}+v^{r+3}+\\
          & & ~~~~~~~~~~~~~~~~~~~~~~~~~~~ .... +\\
          & & ~~~~~~~~~~~~~~~~~~~~~+v^r+v^{r+1}+v^{r+2}+v^{r+3}+...+v^{2r-1}+\\
          & & ~~~~~~~~~~~~~~~~~~~~~~~~~~~+ v^{r+1}+v^{r+2}+v^{r+3}+...+v^{2r-1}+v^{2r})\\
          &=& \frac{1}{r^2}(v^2+2v^3+...+jv^{j+1}+...+(r-1)v^r+\\
          & & rv^{r+1}+(r-1)v^{r+2}+...+(r-j)v^{r+j+1}+...+v^{2r})\\
          &=& \frac{1}{r^2}(~~~~~~~~~~~v^2+...+~~~~~~~~jv^{j+1}+...+(r-1)v^r+\\
          & & rv+(r-1)v^{2}+...+(r-j)v^{j+1}+...+~~~~~~~~v^{r})\\
          &=& \frac{1}{r^2}(rv+rv^2+rv^3+...+rv^r)\\
          &=& \frac{1}{r}(v+v^2+v^3+...+v^r)\\
          &=& e_1
  \end{eqnarray*}
  So $e_1$ is an idempotent element. By the way we show that $e_2^2=e_2$.
  \begin{eqnarray*}
    e_3^2 &=& (1-v^r)^2\\
          &=& 1-2v^{r}+v^{2r}\\
          &=& 1-2v^{r}+v^{r}\\
          &=& 1-v^{r}\\
          &=& e_3\\
  \end{eqnarray*}
  \item  $e_ie_j=0?$
  \begin{eqnarray*}
       e_ie_3 &=& e_i(1-v^{r})=e_i-e_iv^r=e_i-e_i=0\\
       e_1e_2 &=& \frac{-1}{r^2}(v+v^2+v^3+...+v^r)([v+v^2+v^3+...+v^r]-rv^r)\\
              &=& -[\frac{1}{r}(v+v^2+v^3+...+v^r)]^2+\frac{1}{r}v^r(v+v^2+v^3+...+v^r)\\
              &=& -e_1^2+e_1=0
  \end{eqnarray*}
\end{itemize}

 $\forall i,j \in  \{1, 2, 3\}$, we get $\sum_{i=1}^{3}e_{i}=1$, $e_{i}^{2}=e_{i}$ and $e_{i}e_{j}=0$,  then we get:
 \begin{center}
 $R_{r}=e_1R_{r}\oplus  e_2R_{r}\oplus  e_3R_{r}$.
 \end{center}
 \end{proof}
\par In the following, $R$ denotes the ring $e_1\mathbb{F}_{p^k}\oplus  e_2\mathbb{F}_{p^k}\oplus  e_3\mathbb{F}_{p^k}$ which is a sub-ring of $R_{r}$.
\begin{lemma}
    $e_{1}f_1+e_{2}f_2+e_{3}f_3$ is an idempotent in $R[x]/(x^n-1)$ if and only if $f_i$ are idempotents
in $\mathbb{F}_{p^k}[x]/(x^n-1)$; where i = 1, 2, 3.
\end{lemma}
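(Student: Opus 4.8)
The plan is to exploit the orthogonality relations $e_ie_j=0$ and the idempotency $e_i^2=e_i$ established in the preceding proposition, together with the directness of the sum $R=e_1\mathbb{F}_{p^k}\oplus e_2\mathbb{F}_{p^k}\oplus e_3\mathbb{F}_{p^k}$. Both directions rest on a single computation: since the $e_i$ are constants commuting with every $f_j$ in the commutative ring $R[x]/(x^n-1)$, expanding the square and discarding the cross terms (each of which carries a factor $e_ie_j=0$ and so vanishes in every characteristic) gives
\[
(e_1f_1+e_2f_2+e_3f_3)^2 = e_1^2f_1^2+e_2^2f_2^2+e_3^2f_3^2 = e_1f_1^2+e_2f_2^2+e_3f_3^2 .
\]
This reduces the whole statement to comparing $e_1f_1^2+e_2f_2^2+e_3f_3^2$ with $e_1f_1+e_2f_2+e_3f_3$.

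For the direction $(\Leftarrow)$ I would simply substitute $f_i^2=f_i$ into the displayed identity; the right-hand side collapses to $e_1f_1+e_2f_2+e_3f_3$, so the element is idempotent. This direction is immediate and uses nothing beyond the two relations.

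For $(\Rightarrow)$ the idempotency hypothesis yields $e_1(f_1^2-f_1)+e_2(f_2^2-f_2)+e_3(f_3^2-f_3)=0$, and I would then multiply through by a fixed $e_j$; orthogonality collapses the sum to the single term $e_j(f_j^2-f_j)=0$. The remaining task, which is the real crux, is to pass from $e_j(f_j^2-f_j)=0$ to $f_j^2=f_j$. For this I would argue that multiplication by $e_j$ is injective on $\mathbb{F}_{p^k}[x]/(x^n-1)$: since $e_j$ is a nonzero idempotent and $\mathbb{F}_{p^k}$ is a field, the relation $e_j a=0$ forces $a=0$ for $a\in\mathbb{F}_{p^k}$ (otherwise $e_j=e_jaa^{-1}=0$), and this extends coefficientwise to polynomials. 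Hence $e_j\,g=0$ with $g\in\mathbb{F}_{p^k}[x]/(x^n-1)$ forces $g=0$, and applying this to $g=f_j^2-f_j$ gives the claim.

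An equivalent and perhaps cleaner formulation would note that the isomorphisms $e_j\mathbb{F}_{p^k}\cong\mathbb{F}_{p^k}$ induce a ring isomorphism $R[x]/(x^n-1)\cong\bigoplus_{j=1}^{3}\mathbb{F}_{p^k}[x]/(x^n-1)$, under which $e_1f_1+e_2f_2+e_3f_3$ corresponds to the triple $(f_1,f_2,f_3)$; the lemma then becomes the standard fact that an element of a finite product of rings is idempotent precisely when each of its components is. I expect the injectivity of multiplication by $e_j$ (equivalently, the directness of the Peirce decomposition) to be the only nontrivial point, everything else being the two-line square expansion.
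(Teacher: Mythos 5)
Your proof is correct and takes essentially the same route as the paper: expand the square using $e_ie_j=0$ and $e_i^2=e_i$, then read off $f_i^2=f_i$ componentwise from the decomposition $R[x]/(x^n-1)=e_1[\mathbb{F}_{p^k}[x]/(x^n-1)]\oplus e_2[\mathbb{F}_{p^k}[x]/(x^n-1)]\oplus e_3[\mathbb{F}_{p^k}[x]/(x^n-1)]$. The only difference is that you spell out the crux the paper leaves implicit in its phrase ``that means $f_i^2=f_i$,'' namely that multiplication by the nonzero idempotent $e_j$ is injective on $\mathbb{F}_{p^k}[x]/(x^n-1)$; this is a clarification of the same argument, not a different one.
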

\begin{proof}: \\
$\Rightarrow$: Let $e_{1}f_1+e_{2}f_2+e_{3}f_3$ be an idempotent element in $R[x]/(x^n-1)=e_1[\mathbb{F}_{p^k}[x]/(x^n-1)]\oplus  e_2[\mathbb{F}_{p^k}[x]/(x^n-1)]\oplus  e_3[\mathbb{F}_{p^k}[x]/(x^n-1)]$, \\
Then $(e_{1}f_1+e_{2}f_2+e_{3}f_3)^2=e_{1}f_1^2+e_{2}f_2^2+e_{3}f_3^2=e_{1}f_1+e_{2}f_2+e_{3}f_3$ that means $f_i^2=f_i, \forall i \in  \{1, 2, 3\}$.\\
$\Leftarrow$: When $f_i, \forall i \in  \{1, 2, 3\}$ are idempotents on $\mathbb{F}_{p^k}[x]/(x^n-1)$ it is clear that $e_{1}f_1+e_{2}f_2+e_{3}f_3$ is.
\end{proof}

\begin{definition}Gray map\\
Let $x =(x_1,x_2,...,x_n) \in R^n$ where $x_i =e_1s_i +e_2t_i+e_3u_i$, $i =1, 2,...,n$, the Gray map is given by,\\
\begin{center}
   $\phi : R^n \rightarrow \mathbb{F}_{p^k}^{3n}$\\
   $x \mapsto \phi(x)=(s(x),t(x),u(x))$
\end{center}
where $s(x)=(s_1,...,s_n)$, $t(x)=(t_1,...,t_n)$, and $u(x)=(u_1,...,u_n) \in \mathbb{F}_{p^k}^n$.\\
\end{definition}
If $A, B$ are codes over R, we write $A \oplus B$ to denote the code $\{a+b | a \in A, b\in B  \}$ and $A \otimes B$ to denote the code $\{(a,b) | a \in A, b\in B  \}$.
Let $C$ be a linear code over $R$, we define:
\begin{center}
  $C_1=\{s \in \mathbb{F}_{p^k}^n | \exists t, u \in  \mathbb{F}_{p^k}^n |e_1s +e_2t+e_3u \in C \}$
\end{center}
\begin{center}
  $C_2=\{t \in \mathbb{F}_{p^k}^n | \exists u, s \in  \mathbb{F}_{p^k}^n |e_1s +e_2t+e_3u \in C \}$
\end{center}
\begin{center}
  $C_3=\{u \in \mathbb{F}_{p^k}^n | \exists s, t \in  \mathbb{F}_{p^k}^n |e_1s +e_2t+e_3u \in C \}$
\end{center}
Following the same idea in [\cite{Yas10}, \cite{Sol15}], this means $C=e_1C_1\oplus e_2C_2 \oplus e_3C_3$, So any code over $R$ is characterized by its associative p-ary codes $C_1$, $C_2$, and $C_3$ and conversely.

\begin{theorem}\label{theocard}
Let $C$ be a linear code of length n over R, with its p-ary code $C_1$, $C_2$, $C_3$. Then $\phi(C)=C_1\otimes C_2\otimes C_3$ and $|C| = |C_1||C_2||C_3|$.
\end{theorem}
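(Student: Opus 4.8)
The plan is to exploit two facts: that the Gray map $\phi$ is an isomorphism of $\mathbb{F}_{p^k}$-vector spaces, and that $C$, being an $R$-submodule of $R^n$, is stable under multiplication by the orthogonal idempotents $e_1,e_2,e_3$. First I would record that, since $R=e_1\mathbb{F}_{p^k}\oplus e_2\mathbb{F}_{p^k}\oplus e_3\mathbb{F}_{p^k}$ is a direct sum of $\mathbb{F}_{p^k}$-vector spaces, every $x_i\in R$ has a \emph{unique} expression $x_i=e_1 s_i+e_2 t_i+e_3 u_i$. Consequently $\phi$ is a well-defined $\mathbb{F}_{p^k}$-linear bijection from $R^n$ onto $\mathbb{F}_{p^k}^{3n}$; in particular it is injective, which is all that the cardinality count will require.

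To prove $\phi(C)=C_1\otimes C_2\otimes C_3$ I would establish the two inclusions separately. The inclusion $\phi(C)\subseteq C_1\otimes C_2\otimes C_3$ is immediate from the definitions of $C_1,C_2,C_3$: if $x=(x_1,\dots,x_n)\in C$ with $x_i=e_1 s_i+e_2 t_i+e_3 u_i$, then by definition $s(x)\in C_1$, $t(x)\in C_2$ and $u(x)\in C_3$, so $\phi(x)=(s(x),t(x),u(x))\in C_1\otimes C_2\otimes C_3$.

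The reverse inclusion is the crux, and here I would use the $R$-module structure rather than mere $\mathbb{F}_{p^k}$-linearity. Take $a\in C_1$, $b\in C_2$, $c\in C_3$. By definition of $C_1$ there exist $t,u\in\mathbb{F}_{p^k}^n$ with $e_1 a+e_2 t+e_3 u\in C$; multiplying this codeword by the ring element $e_1$ and using $e_1^2=e_1$ together with $e_1 e_2=e_1 e_3=0$ gives $e_1 a\in C$. The identical computation (with $e_2$, resp. $e_3$) yields $e_2 b\in C$ and $e_3 c\in C$, and since $C$ is closed under addition we obtain $e_1 a+e_2 b+e_3 c\in C$. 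Applying $\phi$ gives $(a,b,c)\in\phi(C)$, so $C_1\otimes C_2\otimes C_3\subseteq\phi(C)$ and hence equality. This is precisely the content of the decomposition $C=e_1 C_1\oplus e_2 C_2\oplus e_3 C_3$ recorded before the statement.

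Finally, the size formula follows formally: because $\phi$ is injective, $|C|=|\phi(C)|=|C_1\otimes C_2\otimes C_3|=|C_1||C_2||C_3|$, the last equality holding since the elements of a product code $A\otimes B$ are in bijection with the pairs of $A\times B$. I expect the only genuine obstacle to be the reverse inclusion, where one must invoke that $C$ is an $R$-module—not just an $\mathbb{F}_{p^k}$-subspace—in order to split off each idempotent component $e_i\cdot(\text{codeword})\in C$; everything else reduces to unwinding the definitions and the observation that $\phi$ is a coordinate-wise vector-space isomorphism.
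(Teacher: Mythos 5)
Your proposal is correct and follows essentially the same route as the paper: the forward inclusion by unwinding the definitions of $C_1,C_2,C_3$, the reverse inclusion by multiplying witness codewords by the orthogonal idempotents $e_i$ (the paper forms $h=e_1a+e_2b+e_3c$ in one step, you split off each $e_i\cdot(\text{codeword})\in C$ first, which is the same computation), and the cardinality count from injectivity of $\phi$. Your explicit remark that uniqueness of the decomposition $x_i=e_1s_i+e_2t_i+e_3u_i$ is what makes $\phi$ well-defined and injective is a small point of added care over the paper's bare assertion, but not a different argument.
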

\begin{proof}: \\
$\Rightarrow$: Let $(s_1,...,s_n,t_1,...,t_n,u_1,...,u_n) \in \phi(C)$,  $ \exists x=(x_1,...,x_n) \in C$ such that $\phi(x)=(s_1,...,s_n,t_1,...,t_n,u_1,...,u_n)$. Since $\phi$ is injective, $x_i=e_1s_i+e_2t_i+e_3u_i$,\\
$x=(e_1s_1+e_2t_1+e_3u_1,...,e_1s_n+e_2t_n+e_3u_n)=e_1(s_1,...,s_n)+e_2(t_1,...,t_n)+e_3(u_1,...,u_n)$.\\
$(s_1,...,s_n) \in C_1, (t_1,...,t_n)\in C_2 ~and~ (u_1,...,u_n) \in C_3$,\\
So we have $(s_1,...,s_n,t_1,...,t_n,u_1,...,u_n) \in C_1\otimes C_2\otimes C_3$, then $\phi(C) \subseteq C_1\otimes C_2\otimes C_3$\\
$\Leftarrow$: Let $(s_1,...,s_n,t_1,...,t_n,u_1,...,u_n) \in C_1\otimes C_2\otimes C_3$ where $s=(s_1,...,s_n) \in C_1, t=(t_1,...,t_n)\in C_2 ~and~ u=(u_1,...,u_n) \in C_3$, Then there exists codewords $a=(a_1,..,a_n),b=(b_1,..,b_n),c=(c_1,..c_n) \in C$ such that:\\
$a=e_1s+e_2\lambda_1+e_3\lambda_2$, $b=e_1\lambda_3+e_2t+e_3\lambda_4$, $c=e_1\lambda_5+e_2\lambda_6+e_3u$ where $\lambda_i \in \mathbb{F}_{p^k}^n$, Since $C$ is linear , we have $h=e_1a+e_2b+e_3c=e_1s+e_2t+e_3u \in C$. Hence $(s_1,...,s_n,t_1,...,t_n,u_1,...,u_n)=\phi(h) \in \phi(C)$.\\
Therefore we have $\phi(C)=C_1\otimes C_2\otimes C_3$, and since $\phi$ is bijective it is easy to see $|C| = |C_1||C_2||C_3|$.
\end{proof}
\section{ Cyclic codes over}
\begin{lemma}\label{idemlem}
  Let $C$ be a cyclic code in a ring a commutative ring $A$, then:\\
\begin{enumerate}
  \item there exists a unique idempotent $e(x) \in C$ such that $C =<e(x)>$, and
  \item if $e(x)$ is a nonzero idempotent in $C$, then $C =<e(x)>$ if and only if $e(x)$ is a unity in $C$
\end{enumerate}
\end{lemma}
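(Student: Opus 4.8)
The plan is to reduce both parts to the single structural fact that, under the standing assumption $p\nmid n$, the ambient ring in which $C$ sits is semisimple, so that every ideal is a direct summand. First I would record why this holds in our setting: since $n$ is invertible in $\mathbb{F}_{p^k}$, the polynomial $x^n-1$ shares no root with its derivative $nx^{n-1}$, hence is separable, and therefore $\mathbb{F}_{p^k}[x]/(x^n-1)$ splits as a product of finite fields. Using the decomposition $R[x]/(x^n-1)=e_1[\mathbb{F}_{p^k}[x]/(x^n-1)]\oplus e_2[\mathbb{F}_{p^k}[x]/(x^n-1)]\oplus e_3[\mathbb{F}_{p^k}[x]/(x^n-1)]$ already used above, the whole ambient ring is a finite product of fields, hence semisimple, so $C$ is a direct summand: there is an ideal $D$ with $A=C\oplus D$, where $A$ denotes this ambient ring and $C$ is an ideal of it.

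For existence in part (1), I would write $1=e+f$ with $e\in C$ and $f\in D$, and argue that $e$ is the desired idempotent. For any $c\in C$ one has $c=ce+cf$; since $cf$ lies in $C$ (as $C$ is an ideal containing $c$) and simultaneously in $D$ (as $D$ is an ideal containing $f$), it lies in $C\cap D=\{0\}$, so $c=ce$. Taking $c=e$ gives $e^2=e$, and the relation $c=ce$ shows both that $C=\langle e\rangle$ and that $e$ is a multiplicative identity (unity) for the ring $C$. Uniqueness is then immediate: if $e$ and $e'$ are two idempotent generators, each is a unity of $C$, and a two-sided identity is unique, so $e=ee'=e'$.

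For part (2), with $e(x)$ a given nonzero idempotent of $C$, I would argue both implications through the unity characterization established above. If $e(x)$ is a unity of $C$, then every $c\in C$ satisfies $c=c\,e(x)\in\langle e(x)\rangle$, whence $C\subseteq\langle e(x)\rangle\subseteq C$. Conversely, if $C=\langle e(x)\rangle$, then any $c\in C$ has the form $c=r\,e(x)$, so $c\,e(x)=r\,e(x)^2=r\,e(x)=c$, exhibiting $e(x)$ as a unity of $C$.

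The only substantive step is the first paragraph: everything else is formal manipulation of idempotents and ideals. The main obstacle is therefore establishing that $C$ is genuinely a direct summand of the ambient ring, since the conclusion fails over rings that are not semisimple; this is exactly where the standing assumption that $p$ does not divide $n$ is indispensable, as it guarantees the separability of $x^n-1$ and hence the splitting of the ambient ring into a product of fields.
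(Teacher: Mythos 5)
Your proposal is correct, but there is no proof in the paper to compare it with: the paper states this lemma bare, with no argument at all, implicitly invoking the classical fact about generating idempotents of cyclic codes over finite fields and transplanting it to the ring setting. Your route -- $\gcd(n,p)=1$ makes $x^n-1$ separable, so $\mathbb{F}_{p^k}[x]/(x^n-1)$ is a product of finite fields, so $R[x]/(x^n-1)=e_1[\mathbb{F}_{p^k}[x]/(x^n-1)]\oplus e_2[\mathbb{F}_{p^k}[x]/(x^n-1)]\oplus e_3[\mathbb{F}_{p^k}[x]/(x^n-1)]$ is semisimple, so the ideal $C$ is a direct summand and the component of $1$ in $C$ is the required idempotent unity -- is complete and sound, and your formal manipulations for uniqueness and for part (2) are correct; it is worth noting that only the existence half of part (1) uses semisimplicity, while uniqueness and part (2) are valid in any commutative ring with identity. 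Your closing observation also repairs a genuine defect in the paper's formulation: as literally stated, for an arbitrary commutative ring $A$, the lemma is false -- in $\mathbb{F}_2[x]/(x^2-1)$ the nonzero cyclic code $\langle x+1\rangle$ contains no nonzero idempotent, since $(x+1)^2=0$ -- so the hypothesis you isolate ($\gcd(n,p)=1$, hence semisimplicity of the ambient ring) is exactly what the paper omits but silently relies on in all later applications of the lemma to $A=R$ and $A=\mathbb{F}_{p^k}$.
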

If $C_1$ and $C_2$ are codes of length n over a ring  $A$ , then $C_1 + C_2 =\{c_1 + c_2 | c_1\in C_1 and c_2\in C_2 \}$
is the sum of $C_1$ and $C_2$. Both the intersection and the sum of two cyclic codes are cyclic,
and their generator polynomials and generating idempotents are determined in the next theorem which generalise in \cite{Sol15}.

\begin{theorem}
Let $C_i$ be a cyclic code of length $n$ over $A$ with generator polynomial $g_i(x)$
and generating idempotent $e_i(x)$ for $i =1,2,...,t$. Then:
\begin{enumerate}
  \item[(i).] $\bigcap_{i=1}^{t} C_i$ has generator polynomial $lcm(g_1(x), ..., g_t(x))$ and generating idempotent $\prod_{i=1}^{t}e_i(x)$
  \item[(ii).] $\sum_{i=1}^{t} C_i$ has generator polynomial $\gcd(g_1(x), ..., g_t(x))$ and generating idempotent $\sum_{i=1}^{t} e_i(x)-\sum_{i<j}^{t} e_i(x)e_j(x)+\sum_{i<j<k}^{t} e_i(x)e_j(x)e_k(x)-...+(-1)^{t-1}\prod_{i=1}^{t}e_i(x)$.
\end{enumerate}
\end{theorem}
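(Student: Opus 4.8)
The plan is to prove each of the two claims by separating the statement about generator polynomials from the statement about generating idempotents, and to rely throughout on Lemma~\ref{idemlem}, which tells us that the generating idempotent of a cyclic code is the unique idempotent lying in the code that acts as its multiplicative identity. First I would settle the generator-polynomial assertions, which are purely about the ideal structure of $A[x]/(x^n-1)$, and then verify the idempotent formulas by checking the three defining properties: idempotency, membership in the relevant code, and acting as its identity.

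For the generator polynomials I use that each $C_i = \langle g_i(x)\rangle$ with $g_i(x)\mid x^n-1$, and that in this setting the lattice of cyclic codes corresponds to the lattice of divisors of $x^n-1$. For the intersection, a polynomial $c(x)$ lies in every $\langle g_i(x)\rangle$ exactly when each $g_i(x)$ divides it, which happens exactly when $\operatorname{lcm}(g_1,\dots,g_t)$ divides it; hence $\bigcap_i C_i = \langle \operatorname{lcm}(g_1,\dots,g_t)\rangle$. For the sum, the generalized B\'ezout identity gives $\gcd(g_1,\dots,g_t) = \sum_i a_i(x)g_i(x)$ for suitable $a_i(x)$, so $\gcd \in \sum_i C_i$, while each $g_i$ is a multiple of $\gcd$, giving the reverse inclusion; hence $\sum_i C_i = \langle \gcd(g_1,\dots,g_t)\rangle$. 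Both facts are the $t$-fold versions of the classical two-code statement, obtained either directly as above or by an easy induction using the associativity of $\gcd$ and $\operatorname{lcm}$.

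For the idempotents, set $\varepsilon = \prod_{i=1}^t e_i(x)$ for the intersection. Since the $e_i$ commute and are idempotent, $\varepsilon^2 = \prod_i e_i^2 = \varepsilon$; because each $C_i$ is an ideal containing $e_i$, the product $\varepsilon$ lies in every $C_i$, hence in $\bigcap_i C_i$; and for any $c \in \bigcap_i C_i$ each $e_i$ fixes $c$ (as $c \in C_i$), so $\varepsilon c = c$. By the uniqueness in Lemma~\ref{idemlem}, $\varepsilon$ is the generating idempotent of $\bigcap_i C_i$. For the sum, the key algebraic observation is that the inclusion-exclusion expression in the statement is exactly $e = 1 - \prod_{i=1}^t \bigl(1 - e_i(x)\bigr)$. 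From this, idempotency is immediate: each $1-e_i$ is idempotent, so $\prod_i(1-e_i)$ is idempotent and hence so is $e$. Every monomial in the expansion of $e$ carries at least one factor $e_i \in C_i$, so $e \in \sum_i C_i$. Finally, for $c_k \in C_k$ I compute $(1-e)c_k = \bigl[\prod_{i\neq k}(1-e_i)\bigr](1-e_k)c_k = 0$, since $(1-e_k)c_k = c_k - e_k c_k = 0$ and the ring is commutative; summing over a decomposition $c = \sum_k c_k$ shows $ec = c$ on all of $\sum_i C_i$. Uniqueness again identifies $e$ as the generating idempotent.

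The step I expect to be the main obstacle is the generator-polynomial part rather than the idempotent part: the clean correspondence that $c$ lies in every $\langle g_i\rangle$ iff the $\operatorname{lcm}$ divides it, together with the availability of the B\'ezout identity, both rest on $A[x]/(x^n-1)$ behaving as a principal ideal ring with a well-defined divisor lattice, which in the paper's setting is guaranteed because $A = \mathbb{F}_{p^k}$ is a field and $\gcd(n,p)=1$ makes $x^n-1$ separable. I would therefore state these facts carefully in that context before invoking them, and I would double-check the single algebraic identity $1 - \prod_i(1-e_i) = \sum_i e_i - \sum_{i<j} e_i e_j + \cdots + (-1)^{t-1}\prod_i e_i$, since the entire idempotent-for-the-sum claim reduces to it.
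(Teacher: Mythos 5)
Your proposal is correct, and it follows the same skeleton as the paper: the generator-polynomial claims are proved identically (divisibility characterization of the intersection for the lcm; B\'ezout plus the reverse inclusion for the gcd), and both arguments then invoke Lemma~\ref{idemlem} by checking that the candidate idempotent lies in the code and acts as its unity. Where you genuinely diverge is in how the inclusion-exclusion idempotent for $\sum_i C_i$ is handled: the paper never writes down the closed form $e = 1-\prod_{i=1}^t\bigl(1-e_i(x)\bigr)$; instead it proves idempotency by induction on $t$ via the recurrence $e(x,t)+e_t(x)-e_t(x)e(x,t)$ (which is exactly your identity unrolled one factor at a time, since $a+b-ab = 1-(1-a)(1-b)$), and it verifies the unity property by isolating each $e_i(x)$ inside the expanded sum. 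Your factored form buys three one-line verifications -- idempotency because a product of commuting idempotents is idempotent, membership because every surviving monomial in the expansion contains some factor $e_i(x)\in C_i$, and the unity property from $(1-e)c_k = \bigl[\prod_{i\neq k}(1-e_i)\bigr](1-e_k)c_k = 0$ -- whereas the paper's induction requires a term-by-term bookkeeping of the expansion. Two further points in your favor: you explicitly verify all three properties of $\prod_{i=1}^t e_i(x)$ in part (i), which the paper merely asserts in a single sentence, and you correctly flag that the divisor-lattice and B\'ezout arguments presuppose that $A[x]/(x^n-1)$ behaves as in the field case (i.e.\ $A=\mathbb{F}_{p^k}$ with $\gcd(n,p)=1$); the paper states the theorem over an arbitrary commutative ring $A$ but its own degree and Euclidean-algorithm arguments tacitly require the same hypothesis.
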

\begin{proof}: \\
  \begin{enumerate}
  \item[(i).] Let $g(x)=lcm(g_1(x), ..., g_t(x))$,  $C=\bigcap_{i=1}^{t} C_i$ and $c(x)$ the generator polynomial of $C$ .\\
  For $i =1,2,...,t$, $g_i(x)| g(x) \Rightarrow g(x) \in C_i$, then $g(x) \in C$ which means  $c(x)$ divides $g(x)$.\\
 On the other hand, for $i =1,2,...,t$, $c(x) \in C_i \Rightarrow g_i(x)| c(x)$, then $deg(c(x)) \geq deg(lcm(g_1(x), ..., g_t(x)))=deg(g(x))$.\\
      We get finally $c(x) | g(x)$ and $deg(c(x)) \geq  deg(g(x))$, then the generator polynomial  $c(x)$ of $C$ is $lcm(g_1(x), ..., g_t(x))$.\\
    Since $\prod_{i=1}^{t}e_i(x)$ is an idempotent element and unity in $C$, According to the lemma \ref{idemlem} it is a generating idempotent of $C$.
  \item[(ii).] Let $g(x) = \gcd(g_1(x), ..., g_t(x))$ and
\begin{center}
$e(x)=\sum_{i=1}^{t} e_i(x)-\sum_{i<j}^{t} e_i(x)e_j(x)+...+(-1)^{t-1}\prod_{i=1}^{t}e_i(x)$.
\end{center}
    $\forall i \in \{1,..,t\}$,  $g(x)|g_i(x)$, which shows that $C_i \subseteq <g(x)>$ implying $\sum_{i=1}^{t} C_i \subseteq <g(x)>$.\\
    On the other hand, It follows from the Euclidean Algorithm that
    $g(x) = g_1(x)a_1(x)+ ...+ g_t(x)a_t(x)$ for some $a_i(x) \in R[x] ~(i=1,...,t) $, that means $g(x) \in \sum_{i=1}^{t} C_i$ implying $<g(x)> \subseteq \sum_{i=1}^{t} C_i$. So $<g(x)> = \sum_{i=1}^{t} C_i$.\\
    By recurrence one can see that $e(x,2)$ is idempotent element, suppose it is now true until $t-1$ ie :\\
    $e(x, t)=\sum_{i=1}^{t-1} e_i(x)-\sum_{i<j}^{t-1} e_i(x)e_j(x)+\sum_{i<j<k}^{t-1}e_i(x)e_j(x)e_k(x)-...+(-1)^{t-2}\prod_{i=1}^{t-1}e_i(x)$ is idempotent element,
    So $(e(x,t) +e_t(x)-e_t(x)e(x,t))^2=(e(x,t) +e_t(x)-e_t(x)e(x,t))$.\\
    Hence  \begin{eqnarray*}
    e(x,t) +e_t(x)-e_t(x)e(x,t) &=& \sum_{i=1}^{t-1} e_i(x)-\sum_{i<j}^{t-1} e_i(x)e_j(x)+...+(-1)^{t-2}\prod_{i=1}^{t-1}e_i(x) \\
                                    & &+ e_t(x)- \sum_{i=1}^{t-1} e_i(x)e_t(x)+...+(-1)^{t-1}\prod_{i=1}^{t-1}e_i(x)e_t(x)\\
                                    &=&\sum_{i=1}^{t} e_i(x)-\sum_{i<j}^{t} e_i(x)e_j(x)+...+(-1)^{t-1}\prod_{i=1}^{t}e_i(x)
           \end{eqnarray*}
Let $c(x) \in \sum_{i=1}^{t} C_i$, then $c(x)=\sum_{i=1}^{t} c_i(x)$ where $c_i(x) \in C_i$. For each $i=1,...,t$, we may write the idempotent, under the isolation of one $e_i(x)$ as $e(x)=e(x,i) +e_i(x)-e_i(x)e(x,i)$
\begin{eqnarray*}
  c(x)e(x) &=& \sum_{i=1}^{t} c_i(x)e(x) \\
   &=&   \sum_{i=1}^{t} c_i(x)(e(x,i) +e_i(x)-e_i(x)e(x,i))\\
   &=&\sum_{i=1}^{t} (c_i(x)e(x,i) +c_i(x)e_i(x)-c_i(x)e_i(x)e(x,i))\\
   &=&\sum_{i=1}^{t} (c_i(x)e(x,i) +c_i(x)-c_i(x)e(x,i))\\
   &=& \sum_{i=1}^{t} c_i(x)=c(x)
\end{eqnarray*}
Since $e(x)$ is an idempotent element and unity in $\sum_{i=1}^{t} C_i$, According to the lemma \ref{idemlem} it is a generating idempotent of $\sum_{i=1}^{t} C_i$.

\end{enumerate}
\end{proof}
\begin{corollary}
  Let $C_i$ be a cyclic code of length n over $A$ for $i =1,2,...,t$. Then:
  \begin{eqnarray*}
    dim(C_1 +C_2 +...+ C_t) &=& \sum_{i=1}^{t}dim(C_i )-\sum_{i<j}^{t}dim(Ci \cap C j ) \\
    & & +\sum_{i<j<k}^{t}dim(C_i \cap C_j \cap C_k )-... \\
    & & +(-1)^{t-1}dim(C_1 \cap C_2 \cap...\cap C_t)
  \end{eqnarray*}
\end{corollary}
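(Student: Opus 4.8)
The plan is to prove the identity by induction on $t$, reading it as the inclusion--exclusion principle for the dimensions of the cyclic codes $C_1,\dots,C_t$. The case $t=1$ is trivial, and the case $t=2$ is the classical dimension identity
\begin{equation*}
\dim(C_1+C_2)=\dim(C_1)+\dim(C_2)-\dim(C_1\cap C_2),
\end{equation*}
valid for any two subspaces. I would take this as both the base case and the engine driving the induction.

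For the inductive step, assume the formula for any family of $t-1$ cyclic codes. Setting $S=C_1+\dots+C_{t-1}$ and applying the $t=2$ identity to $S$ and $C_t$ yields
\begin{equation*}
\dim\Big(\sum_{i=1}^{t}C_i\Big)=\dim(S)+\dim(C_t)-\dim(S\cap C_t).
\end{equation*}
The inductive hypothesis expands $\dim(S)$ as the alternating sum over nonempty subsets of $\{1,\dots,t-1\}$. To handle the last term I would apply the hypothesis a second time, now to the $t-1$ cyclic codes $C_1\cap C_t,\dots,C_{t-1}\cap C_t$, which expands $\dim\big(\sum_{i=1}^{t-1}(C_i\cap C_t)\big)$. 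Recombining the two expansions, the subsets of $\{1,\dots,t-1\}$ that do not contain $t$ come from the first, those that do come from the second, and together they reassemble the full alternating sum over nonempty subsets of $\{1,\dots,t\}$ --- provided one has the distributive law
\begin{equation*}
\Big(\sum_{i=1}^{t-1}C_i\Big)\cap C_t=\sum_{i=1}^{t-1}(C_i\cap C_t).
\end{equation*}

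The main obstacle is exactly this distributivity, which is false for arbitrary subspaces (the subspace lattice is only modular), so it must be extracted from the cyclic structure. Since $\gcd(n,p)=1$, the polynomial $x^{n}-1$ is squarefree and factors into distinct irreducibles. By the preceding theorem each cyclic code is its generator's principal ideal, with the sum corresponding to $\gcd$ and the intersection to $\mathrm{lcm}$ of generators; associating to each code the set of irreducible factors omitted by its generator therefore yields an order preserving isomorphism onto the Boolean lattice of subsets of the factor set, under which sum becomes union and intersection becomes intersection. Boolean lattices are distributive, which delivers the displayed law and closes the induction.

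Finally, I note an induction-free alternative that leans directly on the theorem: writing $\dim\langle g\rangle=n-\deg g$ and expressing every sum and intersection through the degree of a $\gcd$ or $\mathrm{lcm}$ of generators, the claim decouples across the irreducible factors of $x^{n}-1$ and reduces, for each factor, to the elementary identity $\sum_{\emptyset\neq J\subseteq\{1,\dots,t\}}(-1)^{|J|-1}=1$, which I would check directly.
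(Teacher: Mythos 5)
The proposal is correct, and it is worth noting that the paper itself offers no proof of this corollary at all: it is stated as an immediate consequence of the preceding theorem identifying the sum of cyclic codes with the $\gcd$ of their generators and the intersection with the $\mathrm{lcm}$. Your argument supplies exactly the justification the paper leaves implicit, and it correctly isolates the one point that genuinely needs the cyclic hypothesis: naive inclusion--exclusion for dimensions of subspaces is false for $t\geq 3$ (three distinct lines through the origin of a plane give $3$ on the right-hand side but $2$ on the left), so the formula cannot follow from linear algebra alone. Your observation that, since $\gcd(n,p)=1$ makes $x^n-1$ squarefree, the map sending a cyclic code to the set of irreducible factors of $x^n-1$ omitted from its generator is an isomorphism onto a Boolean (hence distributive) lattice --- sum corresponding to union, intersection to intersection --- yields the law $\bigl(\sum_{i<t}C_i\bigr)\cap C_t=\sum_{i<t}(C_i\cap C_t)$ that your induction needs; and your second, induction-free route, writing $\dim\langle g\rangle=n-\deg g$ and decoupling over the irreducible factors so that everything reduces to $\sum_{\emptyset\neq J\subseteq\{1,\dots,t\}}(-1)^{|J|-1}=1$, is the cleanest way to phrase what the paper presumably intended. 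One caveat you handled tacitly but correctly: the corollary is stated over an arbitrary ring $A$, where ``$\dim$'' is not even meaningful; reading $A$ as the field $\mathbb{F}_{p^k}$ with $p\nmid n$ (the paper's standing assumptions, which make $x^n-1$ squarefree) is the only interpretation under which the statement, and either of your proofs, makes sense.
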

\begin{theorem}\label{thecyclic}
If $C=e_1C_1\oplus e_2C_2\oplus e_3C_3$ is a linear code of length $n$ over $R$, then
$C$ is a cyclic code over $R$ if and only if $C_1$,$C_2$,and $C_3$ are p-ary cyclic codes of length $n$.
\end{theorem}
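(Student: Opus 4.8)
The plan is to exploit the fact that the cyclic shift operator $\sigma$ commutes with the coordinatewise idempotent decomposition. The crucial observation is that for any word $x \in R^n$ with $x = e_1 s + e_2 t + e_3 u$ (where $s, t, u \in \mathbb{F}_{p^k}^n$ arise from the coordinatewise splitting $x_i = e_1 s_i + e_2 t_i + e_3 u_i$ guaranteed by the decomposition $R = e_1\mathbb{F}_{p^k} \oplus e_2\mathbb{F}_{p^k} \oplus e_3\mathbb{F}_{p^k}$), the shift satisfies
\[
\sigma(x) = e_1 \sigma(s) + e_2 \sigma(t) + e_3 \sigma(u),
\]
since $\sigma$ merely permutes the coordinates while scalar multiplication by each $e_i$ acts within each coordinate. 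I would verify this identity first, as it is the engine that drives both directions.

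For the forward direction, assuming $C$ is cyclic, I would take an arbitrary $s \in C_1$, pick witnesses $t, u$ so that $x = e_1 s + e_2 t + e_3 u \in C$, and apply $\sigma$. Cyclicity of $C$ gives $\sigma(x) \in C$, and by the identity above $\sigma(x) = e_1 \sigma(s) + e_2 \sigma(t) + e_3 \sigma(u)$. By the defining characterization of $C_1$ (the existence of suitable companions), this forces $\sigma(s) \in C_1$, proving that $C_1$ is cyclic; the same argument handles $C_2$ and $C_3$.

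For the converse, assuming each $C_i$ is $p$-ary cyclic, I would take any $x \in C$, write $x = e_1 s + e_2 t + e_3 u$ with $s \in C_1$, $t \in C_2$, $u \in C_3$, and use cyclicity of each $C_i$ to obtain $\sigma(s) \in C_1$, $\sigma(t) \in C_2$, $\sigma(u) \in C_3$. The identity then yields $\sigma(x) = e_1 \sigma(s) + e_2 \sigma(t) + e_3 \sigma(u) \in e_1 C_1 \oplus e_2 C_2 \oplus e_3 C_3 = C$, so $C$ is cyclic.

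The argument is essentially bookkeeping once the commutation identity is in place, and I expect no serious obstacle. The only point requiring a little care is ensuring that the coordinatewise decomposition $x_i = e_1 s_i + e_2 t_i + e_3 u_i$ is both well defined and unique, so that the projections $x \mapsto s$, $x \mapsto t$, $x \mapsto u$ are genuine maps; this is exactly what the orthogonal idempotent decomposition of $R$ established in the preceding section provides, and it makes both equivalences go through cleanly.
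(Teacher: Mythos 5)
Your proposal is correct and follows essentially the same route as the paper: both proofs rest on the commutation identity $\sigma(e_1s+e_2t+e_3u)=e_1\sigma(s)+e_2\sigma(t)+e_3\sigma(u)$ and then run the two directions by shifting componentwise, using the decomposition $C=e_1C_1\oplus e_2C_2\oplus e_3C_3$ (the paper invokes it directly, you phrase it via the witness characterization of the $C_i$, which is the same thing). No substantive difference in method or scope.
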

\begin{proof}
  Let $\sigma$ be the shift operator. all cyclic codes are stable from $\sigma$.\\
One the one hand, let $C=e_1C_1\oplus e_2C_2\oplus e_3C_3$ be a linear code of length $n$ over $R$ and $C_i$ (i=1,...,n) p-ary  cyclic codes.
If $c=(c_1,...,c_n) \in C$ then , $c=(e_1s_1+e_2t_1+e_3u_1,...,e_1s_n+e_2t_n+e_3u_n)=e_1(s_1,...,s_n)+e_2(t_1,...,t_n)+e_3(u_1,...,u_n)$ with $(s_1,...,s_n) \in C_1, (t_1,...,t_n) \in C_2, (u_1,...,u_n) \in C_3$.\\
\begin{eqnarray*}
  \sigma(c) &=& \sigma(e_1(s_1,...,s_n)+e_2(t_1,...,t_n)+e_3(u_1,...,u_n)) \\
             &=& e_1\sigma(s_1,...,s_n)+e_2\sigma(t_1,...,t_n)+e_3\sigma(u_1,...,u_n) \\
\end{eqnarray*}
 Since $\sigma(s_1,...,s_n) \in C_1$, $\sigma(t_1,...,t_n)\in C_2$ and $\sigma(u_1,...,u_n) \in C_3$, then $\sigma(c) \in C$, which is in this case cyclic.\\
\par On the other hand, let $C=e_1C_1\oplus e_2C_2\oplus e_3C_3$ a cyclic code. For $s=(s_1,...,s_n) \in C_1$, $t=(t_1,...,t_n)\in C_2$ and $u=(u_1,...,u_n) \in C_3$, we have $e_1s +e_2t+e_3u \in C$ . Since $\sigma(e_1s +e_2t+e_3u)=(e_1\sigma(s) +e_2\sigma(t)+e_3\sigma(u)) \in C$, then $\sigma(s_1,...,s_n) \in C_1$, $\sigma(t_1,...,t_n)\in C_2$ and $\sigma(u_1,...,u_n) \in C_3$. Therefore $C_i$ are cyclic.
\end{proof}

\begin{corollary}
If $C$ is a cyclic code of length n over $R$, then $\phi(C)$ is a quasi-cyclic code of order 3 and length $3n$.
\end{corollary}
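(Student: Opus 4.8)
The plan is to read the conclusion off the two structural results already in hand: Theorem~\ref{theocard}, which says $\phi(C)=C_1\otimes C_2\otimes C_3$, and Theorem~\ref{thecyclic}, which says that $C$ cyclic over $R$ forces $C_1,C_2,C_3$ to be $p$-ary cyclic codes. Once these are invoked, the only thing to do is exhibit the invariance of $\phi(C)$ under the relevant shift operator, which is a one-line check.

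First I would fix the notion being used: a code of length $3n$ over $\mathbb{F}_{p^k}$ is quasi-cyclic of order $3$ when it is stable under the operator that cyclically permutes each of its three consecutive length-$n$ blocks simultaneously. Writing $\sigma$ for the ordinary cyclic shift on $\mathbb{F}_{p^k}^n$ and $\tau$ for the operator on $\mathbb{F}_{p^k}^{3n}$ defined on a concatenated word $(a\mid b\mid c)$ with $a,b,c\in\mathbb{F}_{p^k}^n$ by $\tau(a\mid b\mid c)=(\sigma(a)\mid\sigma(b)\mid\sigma(c))$, the goal becomes to show $\tau(\phi(C))=\phi(C)$.

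Next I would take an arbitrary element of $\phi(C)$. Since $C$ is cyclic, Theorem~\ref{thecyclic} gives that $C_1,C_2,C_3$ are each cyclic, hence stable under $\sigma$. By Theorem~\ref{theocard} any word of $\phi(C)$ has the form $(s\mid t\mid u)$ with $s\in C_1$, $t\in C_2$, $u\in C_3$. Applying $\tau$ produces $(\sigma(s)\mid\sigma(t)\mid\sigma(u))$, and as $\sigma(s)\in C_1$, $\sigma(t)\in C_2$, $\sigma(u)\in C_3$, this again lies in $C_1\otimes C_2\otimes C_3=\phi(C)$. Thus $\tau(\phi(C))\subseteq\phi(C)$, and since $\tau$ is a bijection of the finite set $\mathbb{F}_{p^k}^{3n}$ the inclusion is an equality, giving the desired invariance.

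The only genuinely delicate point is matching this block-wise operator $\tau$ to the single-coordinate description of quasi-cyclicity: in the concatenated ordering produced by $\phi$, $\tau$ is not literally the shift by three positions, but it is conjugate to that shift through the interleaving permutation of the coordinates. I would therefore close by remarking that, up to this coordinate permutation---an equivalence of codes in the sense fixed in Section~\ref{sec:prelim}---$\phi(C)$ is quasi-cyclic of order $3$ and length $3n$. Everything else follows directly from the two quoted theorems, so no further computation is required.
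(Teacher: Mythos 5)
Your proposal is correct and takes essentially the same route as the paper: both invoke Theorem~\ref{theocard} to identify $\phi(C)=C_1\otimes C_2\otimes C_3$ and Theorem~\ref{thecyclic} to conclude that $C_1,C_2,C_3$ are cyclic, and then read off quasi-cyclicity of the concatenation. The only difference is that you make explicit the blockwise shift invariance and the interleaving-permutation issue (blockwise shift versus shift by three positions), which the paper's one-line proof leaves implicit; this is a point of care, not a change of method.
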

\begin{proof}
As $\phi(C)=C_1\otimes C_2\otimes C_3$ and $C$ cyclic. So according to the theorem  \ref{thecyclic}, $\phi(C)$ is concatenation of 3 cyclic codes of length $n$, hence it is quasi-cyclic code of order 3 and length $3n$.
\end{proof}

\begin{theorem}
If $C=e_1C_1\oplus e_2C_2\oplus e_3C_3$ is a cyclic code of length $n$
over $R$, then $C = <e_1g_1(x), e_2g_2(x),e_3g_3(x)>$ and $|C| =p^{3n-deg(g_1)-deg(g_2)-deg(g_3)}$,
where $g_1(x)$,$g_2(x)$, $g_3(x)$ are the generator polynomial of $C_1$,$C_2$,and $C_3$ respectively.
\end{theorem}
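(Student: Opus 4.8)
The plan is to work inside $R[x]/(x^n-1)$, using the idempotent decomposition $R[x]/(x^n-1)=e_1[\mathbb{F}_{p^k}[x]/(x^n-1)]\oplus e_2[\mathbb{F}_{p^k}[x]/(x^n-1)]\oplus e_3[\mathbb{F}_{p^k}[x]/(x^n-1)]$, together with the relations $e_ie_j=0$ for $i\neq j$, $e_i^2=e_i$ and $e_1+e_2+e_3=1$ established above. By Theorem \ref{thecyclic} each $C_i$ is a $p$-ary cyclic code, hence an ideal $C_i=<g_i(x)>$ of $\mathbb{F}_{p^k}[x]/(x^n-1)$ generated by a divisor $g_i(x)$ of $x^n-1$. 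I would establish $C=<e_1g_1(x),e_2g_2(x),e_3g_3(x)>$ by proving the two inclusions separately, and then read off the cardinality from Theorem \ref{theocard}.

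First I would prove $C\subseteq <e_1g_1,e_2g_2,e_3g_3>$. A typical element of $C=e_1C_1\oplus e_2C_2\oplus e_3C_3$ has the form $e_1a_1(x)+e_2a_2(x)+e_3a_3(x)$ with $a_i(x)\in C_i$, so $a_i(x)=g_i(x)h_i(x)$ for some $h_i(x)\in\mathbb{F}_{p^k}[x]/(x^n-1)$. The key observation is that $e_i^2=e_i$ gives $e_ia_i=(e_ig_i)(e_ih_i)$, so each summand lies in the ideal generated by $e_ig_i$, and therefore the whole element lies in $<e_1g_1,e_2g_2,e_3g_3>$.

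For the reverse inclusion I would take an arbitrary element $\sum_{i=1}^{3}(e_ig_i(x))b_i(x)$ of $<e_1g_1,e_2g_2,e_3g_3>$ with $b_i(x)\in R[x]/(x^n-1)$, and expand each $b_i$ along the idempotents as $b_i=e_1b_{i1}+e_2b_{i2}+e_3b_{i3}$. The orthogonality relations $e_ie_j=0$ $(i\neq j)$ and $e_i^2=e_i$ collapse the product to $e_ig_ib_i=e_i(g_ib_{ii})$, and since $g_ib_{ii}\in <g_i>=C_i$ this term lies in $e_iC_i$. Summing over $i$ shows the element lies in $e_1C_1\oplus e_2C_2\oplus e_3C_3=C$, which completes the equality $C=<e_1g_1,e_2g_2,e_3g_3>$.

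Finally, for the size I would invoke Theorem \ref{theocard}, which gives $|C|=|C_1||C_2||C_3|$. Since each $C_i$ is a cyclic code over $\mathbb{F}_{p^k}$ with generator polynomial $g_i(x)$, it is an $\mathbb{F}_{p^k}$-space of dimension $n-\deg(g_i)$, so $|C_i|=(p^k)^{n-\deg(g_i)}$, and multiplying yields $|C|=(p^k)^{3n-\deg(g_1)-\deg(g_2)-\deg(g_3)}$, matching the stated exponent. I expect the only delicate point to be the bookkeeping in the reverse inclusion: one must use the full idempotent structure of $R[x]/(x^n-1)$ (both orthogonality and $e_i^2=e_i$) to guarantee that no cross terms survive and that the ideal generated by the three $e_ig_i$ does not exceed $C$; everything else is a direct consequence of the decomposition and of Theorem \ref{theocard}.
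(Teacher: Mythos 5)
Your proposal is correct and takes essentially the same route as the paper: the paper compresses the ideal equality $C=\langle e_1g_1(x),e_2g_2(x),e_3g_3(x)\rangle$ into a one-line chain of equivalences (your two inclusions, using orthogonality and $e_i^2=e_i$ to kill cross terms, are exactly the details it omits), and it gets the size from $|C|=|\phi(C)|=|C_1||C_2||C_3|$ just as you do. The only point of divergence is the base of the count: you correctly obtain $|C_i|=(p^k)^{n-\deg g_i}$, hence $|C|=(p^k)^{3n-\deg g_1-\deg g_2-\deg g_3}$, while the paper writes $p^{n-\deg g_i}$, which agrees with the theorem's stated $p^{3n-\deg g_1-\deg g_2-\deg g_3}$ only when $k=1$.
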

\begin{proof}
 Let $c(x) \in C \Leftrightarrow \exists a_1(x),a_2(x),a_3(x) \in R[x]$ s.t $c(x)=e_1g_1(x)a_1(x)+e_2g_2(x)a_2(x)+e_3g_3(x)a_3(x) \Leftrightarrow c(x) \in <e_1g_1(x), e_2g_2(x),e_3g_3(x)>$.\\
\begin{eqnarray*}
  |C|  &=& |\phi(C)| \\
   &=& |C_1||C_2||C_3| \\
   &=& p^{n-deg(g_1)}p^{n-deg(g_2)}p^{n-deg(g_3)}\\
   &=& p^{3n-deg(g_1)-deg(g_2)-deg(g_3)}
\end{eqnarray*}

\end{proof}
\begin{corollary}
For any cyclic code C of length $n$ over $R$ there is a unique polynomial $g(x)$ such that $H=<g(x)>$ and $g(x)|x^n-1$, where $g(x)=\frac{1}{r}\sum_{i=1}^{r-1}v^{i}(g_1(x)-g_2(x))+v^{r}(\frac{r-1}{r}g_1(x)+\frac{1}{r}g_2(x)-g_3(x))+g_3(x)$. Moreover if $g_1(x)= g_2(x)=g_3(x)$, then $g(x)=g_1(x)$.
\end{corollary}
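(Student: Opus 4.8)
The plan is to set $g(x) := e_1 g_1(x) + e_2 g_2(x) + e_3 g_3(x)$ and show it is the asserted generator; substituting the explicit expressions for $e_1,e_2,e_3$ from the Proposition and collecting by powers of $v$ recovers the displayed closed form, so that part is pure bookkeeping. The real content is the pair of inclusions $C = \langle g(x)\rangle$ together with the divisibility $g(x)\mid x^n-1$, followed by uniqueness.

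For $\langle g(x)\rangle = C$: first I would note $g(x)\in C$, since $e_i g_i(x)\in e_i C_i\subseteq C$ and $C$ is closed under addition, whence $\langle g(x)\rangle\subseteq C$. For the reverse inclusion I would invoke the preceding theorem, which gives $C=\langle e_1 g_1(x),\,e_2 g_2(x),\,e_3 g_3(x)\rangle$, so it suffices to place each $e_i g_i(x)$ inside $\langle g(x)\rangle$. The key identity is
\[
 e_i\,g(x)=e_i\bigl(e_1 g_1+e_2 g_2+e_3 g_3\bigr)=e_i^2 g_i=e_i g_i(x),
\]
where the orthogonality $e_i e_j=0$ and idempotency $e_i^2=e_i$ of the Proposition kill every cross term. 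Thus $e_i g_i(x)=e_i g(x)\in\langle g(x)\rangle$ for each $i$, giving $C\subseteq\langle g(x)\rangle$ and hence equality.

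To obtain $g(x)\mid x^n-1$, I would exhibit the complementary factor explicitly. Since each $g_i(x)\mid x^n-1$ in $\mathbb{F}_{p^k}[x]$, write $x^n-1=g_i(x)h_i(x)$ and set $h(x):=e_1 h_1(x)+e_2 h_2(x)+e_3 h_3(x)$. Multiplying out and again using $e_i e_j=0$, $e_i^2=e_i$, and $e_1+e_2+e_3=1$ gives
\[
 g(x)h(x)=\sum_{i=1}^{3} e_i\,g_i(x)h_i(x)=\Bigl(\sum_{i=1}^{3}e_i\Bigr)(x^n-1)=x^n-1,
\]
so $g(x)$ divides $x^n-1$ in $R[x]$. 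The special case is then immediate: if $g_1=g_2=g_3$, then $g(x)=(e_1+e_2+e_3)g_1(x)=g_1(x)$.

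The point needing the most care is uniqueness, since $R[x]$ is not a principal ideal domain. Here I would lean on the hypothesis $\gcd(n,p)=1$, which makes $x^n-1$ separable and forces $\mathbb{F}_{p^k}[x]/(x^n-1)$ — and hence each summand of the decomposition $R[x]/(x^n-1)\cong\bigoplus_i e_i\,\mathbb{F}_{p^k}[x]/(x^n-1)$ — to be a product of fields, in which the monic divisor of $x^n-1$ generating a given ideal is unique. Concretely, if $g'(x)$ also satisfies $C=\langle g'(x)\rangle$ and $g'(x)\mid x^n-1$, then applying the projection $a\mapsto e_i a$ carries $g'$ to a generator of $C_i$ dividing $x^n-1$; by uniqueness in the field case this generator equals $g_i$, so $e_i g'(x)=e_i g_i(x)$ for every $i$, and therefore $g'(x)=\sum_i e_i g'(x)=\sum_i e_i g_i(x)=g(x)$. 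This componentwise reduction, rather than any direct manipulation inside $R[x]$, is the delicate part; everything else collapses onto the orthogonality relations of the Proposition.
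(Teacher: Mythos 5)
Your overall strategy is the natural one, and in fact the paper offers no proof of this corollary at all: it is meant to fall out of the preceding theorem exactly as you argue. Setting $g(x)=e_1g_1(x)+e_2g_2(x)+e_3g_3(x)$, using $e_ie_j=0$, $e_i^2=e_i$, $e_1+e_2+e_3=1$ to get $e_ig(x)=e_ig_i(x)$ and hence $\langle g(x)\rangle=\langle e_1g_1(x),e_2g_2(x),e_3g_3(x)\rangle=C$, and building the cofactor $h(x)=e_1h_1(x)+e_2h_2(x)+e_3h_3(x)$ for the divisibility $g(x)\mid x^n-1$ --- all of that is correct and is how the $r=2$ analogue in \cite{Sol15} is handled.

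However, the step you dismiss as ``pure bookkeeping'' is precisely where your proof and the stated corollary part ways. Expanding your $g(x)$ with the Proposition's idempotents, the coefficient of $v^r$ is $\frac{1}{r}g_1(x)+\frac{r-1}{r}g_2(x)-g_3(x)$, whereas the corollary displays $\frac{r-1}{r}g_1(x)+\frac{1}{r}g_2(x)-g_3(x)$. These agree only when $g_1=g_2$ or $r\equiv 2 \pmod p$ --- in particular in the case $r=2$ of \cite{Sol15} that the paper generalizes, which is presumably how the transposition crept into the paper. Note that no reassignment of idempotents to the $g_i$ rescues the displayed formula either: the constant term forces $g_3$ to pair with $e_3$, and the $v^i$ ($i<r$) coefficients then force $g_1$ with $e_1$ and $g_2$ with $e_2$. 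So either the paper's closed form contains a typo (most likely), or you have proved the statement for a polynomial different from the one displayed. Either way, asserting the identification without carrying out the two-line computation is a genuine gap: had you done it, you would have had to confront the discrepancy rather than certify it.

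A smaller point: the uniqueness claim, as literally stated, is false without a normalization, and your argument glosses this. If $g'(x)$ generates $C$ and divides $x^n-1$, then writing $g'=e_1a_1+e_2a_2+e_3a_3$, each $a_i$ generates $C_i$ and divides $x^n-1$, but only up to a nonzero scalar of $\mathbb{F}_{p^k}$; for instance $g'(x)=2$ generates the whole ring and divides $x^n-1$ just as $g(x)=1$ does (for $p>2$). You correctly invoke uniqueness of the \emph{monic} divisor in the field case, but nothing in the hypotheses forces the components $a_i$ of an arbitrary $g'$ to be monic, so the final step $a_i=g_i$ does not follow. Uniqueness holds only under the tacit convention that each component is taken monic; this defect is inherited from the paper's statement, but a complete proof should make the normalization explicit.
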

\begin{corollary}
Let $C=e_1C_1\oplus e_2C_2\oplus e_3C_3$ be a cyclic code of length $n$ over $R$, where
$\gcd(n, p) = 1$, $C_i = <f_i>$, $g_i(i = 1, 2, 3)$ are idempotents, then there is only one idempotent
$e \in C$ such that $C =<e>$ where $e =e_1f_1(x)+e_2f_2(x)+e_3f_3(x)$.
\end{corollary}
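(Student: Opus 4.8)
The plan is to establish, in order, that $e$ is a nonzero idempotent, that it lies in $C$, that it acts as a multiplicative identity on $C$, and then to read off both the generation $C=\langle e\rangle$ and the uniqueness directly from Lemma~\ref{idemlem}.

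First I would record what the hypotheses supply for free. Since $\gcd(n,p)=1$, the polynomial $x^n-1$ is separable, so $\mathbb{F}_{p^k}[x]/(x^n-1)$ is semisimple and each cyclic code $C_i$ is generated by a unique idempotent $f_i$; in particular $f_i\in C_i$ and $f_i$ is a unity in $C_i$ by Lemma~\ref{idemlem}. Because the $f_i$ are idempotents of $\mathbb{F}_{p^k}[x]/(x^n-1)$, the earlier lemma characterising the idempotents of $R[x]/(x^n-1)$ applies verbatim and shows that $e=e_1f_1(x)+e_2f_2(x)+e_3f_3(x)$ is an idempotent. Moreover, since $C=e_1C_1\oplus e_2C_2\oplus e_3C_3$ and $f_i\in C_i$, we get $e\in C$; assuming $C$ is nontrivial, $e\neq 0$.

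The main step is to verify that $e$ is a unity in $C$. Given any $c\in C$, decompose it as $c=e_1c_1+e_2c_2+e_3c_3$ with $c_i\in C_i$. Since $f_i$ is a unity in $C_i$ we have $f_ic_i=c_i$, and expanding the product while invoking the orthogonality relations $e_ie_j=0$ $(i\neq j)$ and $e_i^2=e_i$ established in the Proposition, every cross term drops out and the diagonal terms collapse:
\begin{eqnarray*}
ec &=& (e_1f_1+e_2f_2+e_3f_3)(e_1c_1+e_2c_2+e_3c_3)\\
   &=& e_1f_1c_1+e_2f_2c_2+e_3f_3c_3\\
   &=& e_1c_1+e_2c_2+e_3c_3 \;=\; c.
\end{eqnarray*}
Hence $e$ is a unity in $C$, so part~2 of Lemma~\ref{idemlem} yields $C=\langle e\rangle$.

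Finally, uniqueness is immediate from part~1 of Lemma~\ref{idemlem}, which asserts that a cyclic code over a commutative ring has a unique idempotent generator; that generator must therefore be the $e$ exhibited above. I do not anticipate a genuine obstacle here, since every structural ingredient is already proven: the only point demanding care is the bookkeeping with the orthogonal idempotents $e_i$ in the computation of $ec$, and this becomes routine once $f_ic_i=c_i$ is in hand.
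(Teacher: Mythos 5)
Your proof is correct, and it is worth noting that the paper itself supplies no proof of this corollary at all --- it is stated as an immediate consequence of the preceding results. Your argument is exactly the derivation the paper's machinery is set up to support: the lemma characterising idempotents of $R[x]/(x^n-1)$ shows $e=e_1f_1+e_2f_2+e_3f_3$ is idempotent, the orthogonality relations $e_ie_j=0$, $e_i^2=e_i$ together with $f_ic_i=c_i$ show $e$ is a unity in $C$, and both the generation $C=\langle e\rangle$ and the uniqueness then follow from the two parts of Lemma~\ref{idemlem}.
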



\section{Cyclic Dual Codes}\label{sec:sec}
Recall that the ordinary inner product of vectors $x=(x_1,...x_n), y=(y_1,...,y_n)\in R^{n}$ is
$x.y=\sum_{i=1}^{n}x_iy_i$. Therefore from a linear code of length $n$ over $R$ , the corresponding dual code is define by
$C^{\perp}=\{ x \in R^{n} | x.c=0 \forall c \in C\}$. For a  polynomial $h$ of degree $k$, $h^{*}=x^{k}h(x^{-1})$ will denote its reciprocal polynomial .

\begin{lemma}
  The number of element in any nonzero linear code $C$ of length $n$  over $R_r$  is of the form $p^s$  Furthermore, the
dual code $C^{\perp}$ has $p^t$ codewords where $s+t=(k+1)n$
\end{lemma}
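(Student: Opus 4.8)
The plan is to dispose of the two ``power of $p$'' assertions by a characteristic-$p$ linear-algebra argument, and then to obtain the relation between $s$ and $t$ from the fact that the inner product induces a perfect pairing on $R_r^{\,n}$; the whole weight of the lemma sits in that last point.

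First I would record that, as noted in the Preliminaries, $R_r$ is a finite commutative ring of characteristic $p$, so it contains the prime field $\mathbb{F}_p$ as a subring and $R_r^{\,n}$ is a finite-dimensional $\mathbb{F}_p$-vector space. A linear code $C$ is by definition an $R_r$-submodule of $R_r^{\,n}$, hence in particular an $\mathbb{F}_p$-subspace, so $|C|=p^{s}$ with $s=\dim_{\mathbb{F}_p}C$. Because the inner product $x\cdot y=\sum_{i}x_iy_i$ is $R_r$-bilinear, its annihilator $C^{\perp}$ is again an $R_r$-submodule of $R_r^{\,n}$, hence also an $\mathbb{F}_p$-subspace, giving $|C^{\perp}|=p^{t}$ with $t=\dim_{\mathbb{F}_p}C^{\perp}$. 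This settles the form of both cardinalities and reduces everything to computing $s+t$.

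Next I would show $s+t=\dim_{\mathbb{F}_p}R_r^{\,n}=n\,(r+1)\,k$. For this it suffices that $C\mapsto C^{\perp}$ be an honest annihilator duality, i.e. that the form $\langle x,y\rangle=\sum_i x_iy_i$ be \emph{nondegenerate} after composing with a suitable $\mathbb{F}_p$-linear functional. Concretely I would fix a functional $\lambda:R_r\to\mathbb{F}_p$ whose kernel contains no nonzero ideal of $R_r$; then $(x,y)\mapsto\lambda\!\left(\sum_i x_iy_i\right)$ is a nondegenerate symmetric $\mathbb{F}_p$-bilinear form on $R_r^{\,n}$, its radical is $0$, and the classical identity $\dim_{\mathbb{F}_p}C+\dim_{\mathbb{F}_p}C^{\perp}=\dim_{\mathbb{F}_p}R_r^{\,n}$ for nondegenerate forms yields the claim (together with $(C^{\perp})^{\perp}=C$).

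The step I expect to be the main obstacle is the construction of such a $\lambda$, equivalently the statement that $R_r$ is a Frobenius ring so that the inner-product duality is perfect rather than merely one inclusion. This is exactly where the hypothesis $\gcd(r,p)=1$ enters: it makes $v^{r+1}-v=v(v^{r}-1)$ separable over $\mathbb{F}_{p^k}$ (its derivative $(r+1)v^{r}-1$ takes the nonzero value $r$ at every $r$-th root of unity and $-1$ at $0$, so no root is repeated), whence by the Chinese Remainder Theorem $R_r=\mathbb{F}_{p^k}[v]/\langle v^{r+1}-v\rangle$ splits as a product of finite fields; such a ring is semisimple, hence Frobenius, and the desired $\lambda$ is the sum over the field factors of their trace maps to $\mathbb{F}_p$ composed with the projections. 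As an alternative that stays inside the notation already set up, I would route the computation through the component codes: combining $C=e_1C_1\oplus e_2C_2\oplus e_3C_3$ and Theorem~\ref{theocard} with the matching decomposition $C^{\perp}=e_1C_1^{\perp}\oplus e_2C_2^{\perp}\oplus e_3C_3^{\perp}$ (which follows since $e_ie_j=0$ makes the inner product split coordinate-wise over the idempotents), one reduces $|C|\,|C^{\perp}|$ to $\prod_i|C_i|\,|C_i^{\perp}|$ and invokes the field-case identity $\dim_{\mathbb{F}_{p^k}}C_i+\dim_{\mathbb{F}_{p^k}}C_i^{\perp}=n$. Either way I would finish by collecting exponents, and I would reconcile the resulting constant with the displayed $(k+1)n$, since the number of $\mathbb{F}_p$-coordinates carried by $R_r^{\,n}$ is $n(r+1)k$.
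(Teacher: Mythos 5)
You should know at the outset that the paper itself gives no proof of this lemma: it is stated bare, evidently carried over from \cite{qian05}, so there is no proof of record to compare yours against. On its own merits, your main argument is correct, and it is the standard Frobenius-ring (generating character) argument: the two cardinality claims are immediate from the $\mathbb{F}_p$-space structure; the separability of $v^{r+1}-v$ under $\gcd(r,p)=1$ does make $R_r$ a product of finite fields; and a functional $\lambda$ whose kernel contains no nonzero ideal (sum of the trace maps on the field factors) makes $(x,y)\mapsto\lambda(x\cdot y)$ nondegenerate. Two points deserve explicit mention. First, you silently identify the module-theoretic dual $C^{\perp}$ with the orthogonal space of the composed $\mathbb{F}_p$-valued form; this needs the one-line argument that if $\lambda(x\cdot c')=0$ for all $c'\in C$, then for fixed $c\in C$ the whole ideal $R_r(x\cdot c)$ lies in $\ker\lambda$ (because $ac\in C$ for every $a\in R_r$), forcing $x\cdot c=0$. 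Second, your diagnosis of the constant is right and amounts to finding an error in the paper: $|R_r|=p^{k(r+1)}$ (the paper's own claim that $|R_r|=p^{r+1}$ is already wrong for $k>1$), so the correct identity is $s+t=k(r+1)n$; the displayed $(k+1)n$ would require $k(r+1)=k+1$, i.e.\ $kr=1$, contradicting $r>1$. It appears to be a careless transcription from \cite{qian05}, whose ring has $k$ components over the prime field.

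Your fallback route, however, does not prove this lemma. The decomposition $C=e_1C_1\oplus e_2C_2\oplus e_3C_3$ and Theorem~\ref{theocard} concern codes over the subring $R=e_1\mathbb{F}_{p^k}\oplus e_2\mathbb{F}_{p^k}\oplus e_3\mathbb{F}_{p^k}$, which is a \emph{proper} subring of $R_r$ as soon as $r>2$: one checks $e_1v=e_1$ and $e_3v=0$, so $e_1R_r=e_1\mathbb{F}_{p^k}$ and $e_3R_r=e_3\mathbb{F}_{p^k}$, but $e_2R_r$ has $\mathbb{F}_{p^k}$-dimension $r-1$, so an arbitrary $R_r$-submodule of $R_r^n$ is not of the form $e_1C_1\oplus e_2C_2\oplus e_3C_3$ with $C_i\subseteq\mathbb{F}_{p^k}^n$. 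That route yields $s+t=3kn$, which agrees with the correct count $k(r+1)n$ only when $r=2$. For general $r$ a decomposition argument must run over all primitive idempotents of $R_r$, i.e.\ over the field factors $\mathbb{F}_{p^k}[v]/(f_j(v))$ of the CRT splitting --- which is exactly what your first argument does, so you should present that one alone.
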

\begin{theorem}
Let $C_i$ be a cyclic code of length $n$ over $A$ with generator polynomial $g_i(x)$
and generating idempotent $e_i(x)$ for $i =1,2,...,t$. Then:\\

  $C_i^{\bot}$ has generator polynomial the reciprocal of $h_i(x)$ where $h_i(x)g_i(x)=x^n-1$ and generating idempotent $1-e_i(x^{-1})$

\end{theorem}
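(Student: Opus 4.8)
The plan is to reduce both assertions to a single algebraic identity linking the Euclidean inner product with multiplication in $A[x]/(x^n-1)$, and then to exploit the ring automorphism $\mu\colon x\mapsto x^{-1}$, which is well defined because $x^n=1$ makes $x$ a unit with $x^{-1}=x^{n-1}$. First I would record the correspondence: writing $a(x)=\sum_i a_i x^i$ and $b(x)=\sum_j b_j x^j$, the coefficient of $x^k$ in $a(x)\,b(x^{-1})$ reduced modulo $x^n-1$ equals $\sum_i a_i b_{i-k}$ (indices mod $n$), which is exactly $a\cdot\sigma^{k}(b)$, the inner product of $a$ with the $k$-th cyclic shift of $b$. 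Since a cyclic code is shift-invariant and generated by $g_i$, this yields the clean characterization
\[
a\in C_i^{\perp}\iff a(x)\,g_i(x^{-1})\equiv 0 \pmod{x^n-1},
\]
because orthogonality to $g_i$ and all of its shifts already forces membership in $C_i^{\perp}$.

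Next I would apply $\mu$ to this congruence. As $\mu$ is a ring automorphism, $a(x)g_i(x^{-1})\equiv0$ is equivalent to $\mu(a)\,g_i(x)\equiv0$, i.e. $\mu(a)\in\operatorname{Ann}(C_i)$. Using $g_i(x)h_i(x)=x^n-1$ one checks $\operatorname{Ann}(\langle g_i\rangle)=\langle h_i\rangle$, since an element annihilates $g_i$ modulo $g_ih_i$ iff it is a multiple of $h_i$. Hence $C_i^{\perp}=\mu(\langle h_i\rangle)=\langle h_i(x^{-1})\rangle$. Finally, because $h_i^{*}(x)=x^{\deg h_i}h_i(x^{-1})$ differs from $h_i(x^{-1})$ only by the unit $x^{\deg h_i}$, the two generate the same ideal, so $C_i^{\perp}=\langle h_i^{*}\rangle$; reciprocating $g_ih_i=x^n-1$ shows $h_i^{*}\mid x^n-1$, confirming it is the generator polynomial.

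For the idempotent I would use the standing hypothesis $\gcd(n,p)=1$, which makes $x^n-1$ separable and $g_i,h_i$ coprime, so CRT gives $A[x]/(x^n-1)\cong A[x]/(g_i)\times A[x]/(h_i)$. Under this isomorphism the generating idempotent $e_i$ of $\langle g_i\rangle$ is the element that is $\equiv0\pmod{g_i}$ and $\equiv1\pmod{h_i}$, whence $1-e_i$ is the generating idempotent of $\langle h_i\rangle$. Applying $\mu$, which carries the unity of an ideal $I$ to the unity of $\mu(I)$, transports this to $C_i^{\perp}=\mu(\langle h_i\rangle)$ and yields generating idempotent $\mu(1-e_i)=1-e_i(x^{-1})$. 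Alternatively one verifies directly that $1-e_i(x^{-1})$ is idempotent, by applying $\mu$ to $e_i^2=e_i$ and expanding $(1-\mu(e_i))^2$, and that it acts as the identity on $C_i^{\perp}$, then invokes Lemma~\ref{idemlem}.

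The part needing the most care is the opening correspondence: getting the index bookkeeping in $a(x)b(x^{-1})\bmod(x^n-1)$ exactly right, and justifying that orthogonality to the single generator $g_i$ together with its shifts suffices for membership in $C_i^{\perp}$. Once that identity and the automorphism $\mu$ are in place, both the generator-polynomial and the idempotent statements follow by the same transport-of-structure argument, so the remainder is routine.
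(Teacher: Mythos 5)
Your proof is correct, and it takes a genuinely different route from the paper's. The paper's proof handles \emph{only} the generating-idempotent claim: it shows $1-e_i(x^{-1})\in C_i^{\bot}$ from $e_i(x)(1-e_i(x))=0$, then takes an arbitrary $c_i(x)\in C_i^{\bot}$, uses the (unstated) fact that $c_i^{\ast}(x)e_i(x)=0$ in $A[x]/(x^n-1)$, substitutes $x\mapsto x^{-1}$ and multiplies by $x^{\deg c_i}$ to conclude $c_i(x)(1-e_i(x^{-1}))=c_i(x)$, and finally invokes Lemma~\ref{idemlem} (an idempotent that is a unity of the code generates it). The generator-polynomial assertion about $h_i^{*}$ is never proved in the paper at all. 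Your argument instead makes explicit the key correspondence that the paper leaves implicit --- that $a\in C_i^{\perp}$ iff $a(x)g_i(x^{-1})\equiv 0 \pmod{x^n-1}$ --- and then derives \emph{both} claims uniformly by transport of structure along the automorphism $\mu\colon x\mapsto x^{-1}$: the annihilator computation $\operatorname{Ann}(\langle g_i\rangle)=\langle h_i\rangle$ gives $C_i^{\perp}=\langle h_i(x^{-1})\rangle=\langle h_i^{*}\rangle$ (the unit $x^{\deg h_i}$ being harmless), and the CRT splitting $A[x]/(x^n-1)\cong A[x]/(g_i)\times A[x]/(h_i)$, available since $\gcd(n,p)=1$ makes $g_i,h_i$ coprime, identifies $1-e_i$ as the unity of $\langle h_i\rangle$, whence $\mu(1-e_i)=1-e_i(x^{-1})$ is the unity of $C_i^{\perp}$. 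What your approach buys is completeness (the reciprocal-polynomial part is actually proved) and a cleaner conceptual packaging; what the paper's buys is brevity, and your closing ``alternative'' --- verify idempotency and the unity property directly, then cite Lemma~\ref{idemlem} --- is essentially the paper's proof. The only caveats, which apply equally to the paper, are that the cancellation step in the annihilator computation and the CRT step tacitly require $A$ to be a field (or at least that $g_i h_i = x^n-1$ with $g_i, h_i$ coprime and $x^n-1$ a non-zero-divisor situation), even though the statement is phrased over a general ring $A$.
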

\begin{proof}: \\
 Since $e_i(x)(1-e_i(x))=0$, $e_i(x)$ is orthogonal to the idempotent $1-e_i(x^{-1})$, that is $1-e_i(x^{-1})\in C_i^{\bot}$.\\
    on the other hand, Let $c_i(x) \in C_i^{\bot}$, then $c_i^{\ast}(x)e(x)=0$ in $R[x]/<x^n-1>$.
    \begin{eqnarray*}
      c_i^{\ast}(x)(1-e_i(x))&=& c_i^{\ast}(x) \\
      \Rightarrow c_i^{\ast}(x^{-1})(1-e_i(x^{-1}))&=& c_i^{\ast}(x^{-1}) \\
      \Rightarrow x^{deg(c_i)}c_i^{\ast}(x^{-1})(1-e_i(x^{-1}))&=& x^{deg(c_i)}c_i^{\ast}(x^{-1}) \\
      \Rightarrow c_i(x)(1-e_i(x^{-1}))&=& c_i(x) \\
    \end{eqnarray*}
Since $1-e_i(x^{-1})$ is an idempotent element and unity in $C_i^{\bot}$, According to the lemma \ref{idemlem} it is a generating idempotent of $C_i^{\bot}$.

\end{proof}
\begin{theorem}
Let $C$ be a linear code of length $n$ over R, with its p-ary code $C_1$, $C_2$, $C_3$.
Then $C^{\bot}=e_1C_1^{\bot}\oplus e_2C_2^{\bot}\oplus e_3C_3^{\bot}$.
\end{theorem}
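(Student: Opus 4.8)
The plan is to reduce the $R$-valued inner product on $R^{n}$ to three independent $\mathbb{F}_{p^k}$-valued inner products, one per idempotent slot, and then read off the dual coordinatewise. First I would record the computational lemma: if $x=e_1 s+e_2 t+e_3 u$ and $c=e_1 s'+e_2 t'+e_3 u'$ are words of $R^{n}$, with $s,t,u,s',t',u'\in\mathbb{F}_{p^k}^{n}$, then
\[
x\cdot c = e_1(s\cdot s') + e_2(t\cdot t') + e_3(u\cdot u').
\]
This drops out by expanding each coordinate product $x_i c_i$ and invoking the orthogonality relations $e_ie_j=0$ ($i\neq j$) and $e_i^{2}=e_i$ from the Proposition, which annihilate every cross term; the scalars of $\mathbb{F}_{p^k}$ commute with the $e_i$, so only the three diagonal sums survive.

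The second ingredient is that, since $R=e_1\mathbb{F}_{p^k}\oplus e_2\mathbb{F}_{p^k}\oplus e_3\mathbb{F}_{p^k}$ is a direct sum of $\mathbb{F}_{p^k}$-modules, an element $e_1a+e_2b+e_3c$ of $R$ vanishes if and only if $a=b=c=0$. Combined with the lemma, this yields the key equivalence: $x\cdot c=0$ in $R$ if and only if $s\cdot s'=0$, $t\cdot t'=0$ and $u\cdot u'=0$ hold simultaneously in $\mathbb{F}_{p^k}$.

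With these in hand I would prove the two inclusions. For $\supseteq$, take $x=e_1 s'+e_2 t'+e_3 u'$ with $s'\in C_1^{\perp}$, $t'\in C_2^{\perp}$, $u'\in C_3^{\perp}$, and any $c=e_1 s+e_2 t+e_3 u\in C$; by the decomposition $C=e_1C_1\oplus e_2C_2\oplus e_3C_3$ one has $s\in C_1$, $t\in C_2$, $u\in C_3$, so the lemma makes every slot of $x\cdot c$ vanish and $x\in C^{\perp}$. For $\subseteq$, take $x=e_1 s'+e_2 t'+e_3 u'\in C^{\perp}$; given any $s\in C_1$, the definition of $C_1$ furnishes $t,u$ with $c=e_1 s+e_2 t+e_3 u\in C$, and $0=x\cdot c$ then forces $s'\cdot s=0$. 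Letting $s$ range over $C_1$ gives $s'\in C_1^{\perp}$, and symmetrically $t'\in C_2^{\perp}$, $u'\in C_3^{\perp}$, so $x\in e_1C_1^{\perp}\oplus e_2C_2^{\perp}\oplus e_3C_3^{\perp}$.

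The one genuinely delicate point is the $\subseteq$ direction, specifically the use of the projection definition of $C_1,C_2,C_3$: orthogonality of $x$ to a \emph{whole} codeword $c$ only constrains $s'\cdot s$ in the first slot (the three slots decouple in the lemma), and to upgrade this to $s'\in C_1^{\perp}$ I must know that every $s\in C_1$ actually occurs as the first coordinate of some codeword of $C$, which is precisely how $C_1$ was defined. I expect this bookkeeping, rather than any calculation, to be the main obstacle. A shortcut is to prove a single inclusion and close the gap by counting: since $R\cong\mathbb{F}_{p^k}^{3}$ is Frobenius, $|C^{\perp}|=|R|^{n}/|C|=p^{3kn}/(|C_1||C_2||C_3|)=|C_1^{\perp}||C_2^{\perp}||C_3^{\perp}|$, which matches $|e_1C_1^{\perp}\oplus e_2C_2^{\perp}\oplus e_3C_3^{\perp}|$ and forces equality.
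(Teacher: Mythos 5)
Your proof is correct and follows essentially the same route as the paper's: expand the inner product coordinatewise, use the orthogonality relations $e_ie_j=0$, $e_i^2=e_i$ to reduce $x\cdot c=0$ to the three slot conditions $s\cdot s'=t\cdot t'=u\cdot u'=0$, and then establish both inclusions. In fact you are more careful than the paper on the $\subseteq$ direction --- the paper silently uses the fact that every $s\in C_1$ occurs as the first slot of some codeword, which you correctly identify as the point where the projection definition of the $C_i$ is needed.
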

\begin{proof}: \\
Let $c \in e_1C_1^{\bot}\oplus e_2C_2^{\bot}\oplus e_3C_3^{\bot}$, then it exists $\tilde{c_i} \in C_i^{\bot}$ for $i= 1,2,3$ such that
$c=e_1\tilde{c_1}+ e_2\tilde{c_2}+ e_3\tilde{c_3}$. Let $x \in C$,  then it exists $c_i \in C_i$ for $i= 1,2,3$ such that
$x=e_1c_1+ e_2c_2+ e_3c_3$. Hence $x.c=0$ that means
\begin{center}
$e_1C_1^{\bot}\oplus e_2C_2^{\bot}\oplus e_3C_3^{\bot} \subseteq C^{\bot}$.
\end{center}
One the other hand, Let $x=e_1c_1+ e_2c_2+ e_3c_3 \in C=e_1C_1\oplus e_2C_2 \oplus e_3C_3$, and $c=e_1\tilde{c_1}+ e_2\tilde{c_2}+ e_3\tilde{c_3} \in C^{\bot}\subseteq R^n$, hence
$x.c=0=e_1c_1\tilde{c_1}+e_2c_2\tilde{c_2}+e_3c_3\tilde{c_3}$ $\Rightarrow$ :\\
$\left\{
  \begin{array}{ll}
    c_1\tilde{c_1}= & \hbox{0;} \\
    c_2\tilde{c_2}= & \hbox{0;} \\
    c_3\tilde{c_3}= & \hbox{0.}
  \end{array}
\right.$
Then $\tilde{c_i} \in C_i^{\bot}$ for $i= 1,2,3$ $\Rightarrow$ $c \in e_1C_1^{\bot}\oplus e_2C_2^{\bot}\oplus e_3C_3$.\\
Finally we have $C^{\bot}=e_1C_1^{\bot}\oplus e_2C_2^{\bot}\oplus e_3C_3^{\bot}$.
\end{proof}
\begin{corollary}
Let $C = <e_1g_1(x), e_2g_2(x),e_3g_3(x)>$ be a cyclic code of length $n$ over R, where
$\gcd(n, p) = 1$, $C_i = <g_i>$, $g_i(i = 1, 2, 3)$ are idempotents, then $C = <e_1g_1(x)+e_2g_2(x)+e_3g_3(x)>$ and the idempotent generator of $C^{\perp}$ is $1-e_1g_1(x^{-1})-e_2g_2(x^{-1})e_3g_3(x^{-1})$
\end{corollary}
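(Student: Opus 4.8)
The plan is to treat the two assertions separately but with the same mechanism in both. Write $e(x)=e_1g_1(x)+e_2g_2(x)+e_3g_3(x)$. Because each $g_i$ is a nonzero idempotent with $C_i=\langle g_i\rangle$, part~(2) of Lemma~\ref{idemlem} tells us that $g_i$ is in fact \emph{the} generating idempotent of $C_i$ over $\mathbb{F}_{p^k}[x]/(x^n-1)$; this identification is what lets the later dual-code formula be invoked with $g_i$ in the role of the generating idempotent. By the lemma characterising idempotents of $R[x]/(x^n-1)$ (idempotence of $e_1f_1+e_2f_2+e_3f_3$ is equivalent to idempotence of each $f_i$), the element $e(x)$ is itself an idempotent of $R[x]/(x^n-1)$.

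For the first claim I would exploit the orthogonality relations $e_ie_j=0$ for $i\neq j$, $e_i^2=e_i$, and $e_1+e_2+e_3=1$ established in the Proposition. Multiplying $e(x)$ on the left by $e_i$ collapses the sum to a single term, giving $e_i\,e(x)=e_ig_i(x)$ for $i=1,2,3$. Hence every listed generator $e_ig_i(x)$ lies in $\langle e(x)\rangle$, so $C=\langle e_1g_1(x),e_2g_2(x),e_3g_3(x)\rangle\subseteq\langle e(x)\rangle$; conversely $e(x)\in C$ by the very definition of $C$ as that sum, so $\langle e(x)\rangle\subseteq C$. Thus $C=\langle e(x)\rangle$. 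Equivalently, $e(x)$ is an idempotent acting as a unity on $C$ (since $e(x)c=c$ for each generator $c$), so Lemma~\ref{idemlem} again certifies it as the generating idempotent.

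For the dual, I would first invoke the preceding theorem giving $C^{\bot}=e_1C_1^{\bot}\oplus e_2C_2^{\bot}\oplus e_3C_3^{\bot}$, then apply the theorem on the dual of a cyclic code to each field-level component: since $g_i$ is the generating idempotent of $C_i$, the generating idempotent of $C_i^{\bot}$ is $1-g_i(x^{-1})$. Assembling these through the same $e_1f_1+e_2f_2+e_3f_3$ construction yields the idempotent $e^{\bot}(x)=e_1(1-g_1(x^{-1}))+e_2(1-g_2(x^{-1}))+e_3(1-g_3(x^{-1}))$, and using $e_1+e_2+e_3=1$ this simplifies to $1-e_1g_1(x^{-1})-e_2g_2(x^{-1})-e_3g_3(x^{-1})$. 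Repeating the orthogonality argument of the previous paragraph, now with the field idempotents $1-g_i(x^{-1})$, confirms $e_i\,e^{\bot}(x)=e_i(1-g_i(x^{-1}))$, so $\langle e^{\bot}(x)\rangle=e_1C_1^{\bot}\oplus e_2C_2^{\bot}\oplus e_3C_3^{\bot}=C^{\bot}$, and $e^{\bot}(x)$ is the generating idempotent of $C^{\bot}$.

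The calculations here are entirely routine, so the main obstacle is not computational but bookkeeping: I must keep the ring idempotents $e_1,e_2,e_3$ and the polynomial idempotents $g_i$ strictly separate, verify that the hypothesis ``$g_i$ is an idempotent generating $C_i$'' really makes $g_i$ the unique generating idempotent so that the dual theorem applies with $1-g_i(x^{-1})$, and track the coefficientwise substitution $x\mapsto x^{-1}$ correctly. I would also flag that the target expression as printed, $1-e_1g_1(x^{-1})-e_2g_2(x^{-1})e_3g_3(x^{-1})$, contains a typographical slip: the final term should read $-e_3g_3(x^{-1})$, so that the generating idempotent of $C^{\bot}$ is $1-e_1g_1(x^{-1})-e_2g_2(x^{-1})-e_3g_3(x^{-1})$.
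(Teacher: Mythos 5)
Your proof is correct and takes exactly the route the paper intends: the paper states this corollary without proof, as an immediate consequence of the preceding two theorems (the decomposition $C^{\bot}=e_1C_1^{\bot}\oplus e_2C_2^{\bot}\oplus e_3C_3^{\bot}$ and the fact that the generating idempotent of the dual of a field-level cyclic code with generating idempotent $g_i(x)$ is $1-g_i(x^{-1})$), combined with Lemma~\ref{idemlem}, the lemma characterising idempotents of $R[x]/(x^n-1)$, and the orthogonality relations for $e_1,e_2,e_3$ --- all of which you invoke in the right places. Your flag of the typographical slip is also correct: the printed expression $1-e_1g_1(x^{-1})-e_2g_2(x^{-1})e_3g_3(x^{-1})$ is missing a minus sign, and the generating idempotent of $C^{\perp}$ should read $1-e_1g_1(x^{-1})-e_2g_2(x^{-1})-e_3g_3(x^{-1})$.
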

\begin{corollary}
  $C$ is a cyclic self dual code of length $n$ over $R$ if and only if $C_1$,$C_2$,and $C_3$ p-ary cyclic self dual codes of length $n$.
\end{corollary}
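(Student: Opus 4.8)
The plan is to reduce this statement to the two structural results already in hand: the decomposition $C=e_1C_1\oplus e_2C_2\oplus e_3C_3$ of an arbitrary linear code over $R$, together with its dual counterpart $C^{\bot}=e_1C_1^{\bot}\oplus e_2C_2^{\bot}\oplus e_3C_3^{\bot}$ just established, and Theorem \ref{thecyclic}, which asserts that $C$ is cyclic over $R$ exactly when $C_1,C_2,C_3$ are all p-ary cyclic. Since being self-dual is the single extra condition $C=C^{\bot}$ imposed on top of cyclicity, the whole corollary follows once we show that $C=C^{\bot}$ holds if and only if $C_i=C_i^{\bot}$ for every $i=1,2,3$; the cyclicity half is then quoted verbatim from Theorem \ref{thecyclic} and the two equivalences are intersected.

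For the core equivalence I would exploit the orthogonality of the idempotents. Writing both $C$ and $C^{\bot}$ in their decompositions, the equation $C=C^{\bot}$ reads $e_1C_1\oplus e_2C_2\oplus e_3C_3=e_1C_1^{\bot}\oplus e_2C_2^{\bot}\oplus e_3C_3^{\bot}$. Multiplying a codeword through by $e_i$ and using $e_ie_j=0$ for $i\neq j$ together with $e_iR=e_i\mathbb{F}_{p^k}$ projects each side onto its $i$-th summand, isolating $e_iC_i=e_iC_i^{\bot}$; since $e_i$ acts as a unit on the component $e_i\mathbb{F}_{p^k}^n$, cancelling it yields $C_i=C_i^{\bot}$. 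The converse is immediate: if each $C_i=C_i^{\bot}$, then summing the three identically-indexed components reassembles $C=C^{\bot}$. Combining this with Theorem \ref{thecyclic} shows that $C$ is a cyclic self-dual code over $R$ precisely when it is both cyclic (each $C_i$ p-ary cyclic) and self-dual (each $C_i=C_i^{\bot}$), which is exactly the assertion that $C_1,C_2,C_3$ are p-ary cyclic self-dual codes.

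The step I expect to require the most care is the projection that passes from the aggregate equality $C=C^{\bot}$ to the three componentwise equalities $C_i=C_i^{\bot}$. What makes it work, and what should be written cleanly rather than waved at, is that the $e_i$ furnish a genuine internal direct-sum decomposition of $R^n$, so the summands $e_iC_i$ are uniquely determined by $C$. The subtlety is that $e_i$ is not the identity of $R$ but only a local unit on its own component, so the cancellation $e_iC_i=e_iC_i^{\bot}\Rightarrow C_i=C_i^{\bot}$ must be justified through the injectivity of the map $s\mapsto e_is$ on $\mathbb{F}_{p^k}^n$ rather than by invoking a global inverse of $e_i$ in $R$.
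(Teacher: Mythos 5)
Your proposal is correct and follows exactly the route the paper intends: the paper states this as an unproved corollary of the two preceding results, namely Theorem \ref{thecyclic} (cyclicity of $C$ is equivalent to cyclicity of each $C_i$) and the theorem $C^{\bot}=e_1C_1^{\bot}\oplus e_2C_2^{\bot}\oplus e_3C_3^{\bot}$, which is precisely the combination you carry out. Your additional care with the projection $e_iC=e_iC_i$ and the cancellation of $e_i$ via injectivity of $s\mapsto e_is$ on $\mathbb{F}_{p^k}^n$ fills in the details the paper leaves implicit, and is sound.
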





\end{document}